\documentclass[journal, 10pt]{IEEEtran}

\usepackage[english]{babel}

\usepackage{graphicx}
\usepackage{caption}
\usepackage{subcaption}
\usepackage{amssymb}
\usepackage{cite}
\usepackage{amsthm}
\usepackage{epstopdf}
\usepackage{color}
\usepackage{soul}
\usepackage{balance}
\usepackage{multirow}
\usepackage{multicol}
\usepackage{booktabs}
\usepackage{amsfonts}
\usepackage{epsfig}
\usepackage{url}
\usepackage{mathrsfs}
\usepackage{physics}
\usepackage{hyperref}
\usepackage{todonotes}
\hypersetup{
    colorlinks=true,
    linkcolor=cyan,
    citecolor=blue,
    pdfborder={0 0 0},
}
\usepackage[english]{babel}
\usepackage[utf8]{inputenc}
\usepackage{textcomp}
\usepackage{algorithm}
\usepackage[noend]{algpseudocode}
\usepackage{dirtytalk}
\usepackage{bbm}
\usepackage{algorithm,algcompatible,amsmath}
\algnewcommand\INPUT{\item[\textbf{Input:}]}%
\algnewcommand\OUTPUT{\item[\textbf{Output:}]}%

\newtheorem{definition}{Definition}
\newtheorem{theorem}{Theorem}
\newtheorem{lemma}{Lemma}

\newtheorem{corollary}{Corollary}

\newtheorem{assumption}{Assumption}

\usepackage{textcomp}
\usepackage{xcolor}
\usepackage{multirow}

\title{Enhancing Gradient Variance and Differential Privacy in Quantum Federated Learning}
\author{Duc-Thien~Phan,
        Minh-Duong~Nguyen,
        Quoc-Viet~Pham,~\IEEEmembership{Senior Member,~IEEE} \\
        and Huilong Pi,~\IEEEmembership{Senior Member,~IEEE} 
        \vspace{-0.25cm}
\thanks{Duc-Thien~Phan is with the College of Computer Science and Electronic Engineering, Hunan University, Hunan 410082, China (e-mail: phanducthien82@hnu.edu.cn).}
\thanks{Minh-Duong Nguyen is with the Department of Intelligent Computing and Data Science, VinUniversity, Hanoi, Vietnam (e-mail: mduongbkhn@gmail.com).}
\thanks{Quoc-Viet Pham is with the School of Computer Science and Statistics, Trinity College Dublin, Dublin 2, D02 PN40, Ireland (e-mail: viet.pham@tcd.ie).}
\thanks{Huilong Pi (corresponding author) is with the College of Computer Science and Electronic Engineering, Hunan University, Hunan 410082, China (e-mail: phl880217@hnu.edu.cn).}
}
\setlength {\marginparwidth }{2cm}
\begin{document}
\maketitle

\begin{abstract}
\textcolor{blue}{
Upon integrating Quantum Neural Network (QNN) as the local model, Quantum Federated Learning (QFL) has recently confronted notable challenges. Firstly, exploration is hindered over sharp minima, decreasing learning performance. Secondly, the steady gradient descent results in more stable and predictable model transmissions over wireless channels, making the model more susceptible to attacks from adversarial entities. Additionally, the local QFL model is vulnerable to noise produced by the quantum device's intermediate noise states, since it requires the use of quantum gates and circuits for training. This local noise becomes intertwined with learning parameters during training, impairing model precision and convergence rate. To address these issues, we propose a new QFL technique that incorporates differential privacy and introduces a dedicated noise estimation strategy to quantify and mitigate the impact of intermediate quantum noise. Furthermore, we design an adaptive noise generation scheme to alleviate privacy threats associated with the vanishing gradient variance phenomenon of QNN and enhance robustness against device noise. Experimental results demonstrate that our algorithm effectively balances convergence, reduces communication costs, and mitigates the adverse effects of intermediate quantum noise while maintaining strong privacy protection. Using real-world datasets, we achieved test accuracy of up to 98.47\% for the MNIST dataset and 83.85\% for the CIFAR-10 dataset while maintaining fast execution times.}
\end{abstract}

\begin{IEEEkeywords}
Differential Privacy, Federated Learning, Quantum Federated Learning, Quantum Machine Learning, Quantum Neural Network 
\end{IEEEkeywords}

\section{Introduction} \label{sec:introduction}
The advancement of cloud computing has significantly addressed the challenges of big data processing and enabled the widespread adoption of machine learning (ML) across various domains, including personal fitness tracking~\cite{yadav2023tinyradar}, traffic monitoring~\cite{zhao2024faashark}, and the financial sector~\cite{herman2023quantum}. However, as data volumes continue to increase, ML models are increasingly constrained by computational capacity. Moreover, traditional ML models often pose serious concerns regarding data privacy and security due to the potential leakage of clients' personal information.

Quantum computing, characterized by parallelism and superposition, offers substantial improvements in computational efficiency, positioning Quantum Machine Learning (QML) as a promising research direction in the big data era~\cite{cerezo2022challenges}. Recent works have explored the integration of Quantum Computing (QC) and Federated Learning (FL) to jointly address performance and privacy challenges more effectively than classical ML~\cite{qiao2024transitioning}. This integrated approach, known as Quantum Federated Learning (QFL), remains under-explored with respect to privacy and security preservation.

Existing privacy-preserving techniques for QFL primarily focus on quantum homomorphic encryption~\cite{li2024secure}, quantum differential privacy (DP)~\cite{watkins2023quantum}, and quantum-safe multi-party computation~\cite{sutradhar2021efficient}. The concept of DP involves introducing random noise to model updates, rendering it difficult for adversaries to extract private information~\cite{dwork2016calibrating}. Zhou \textit{et al.}~\cite{zhou2017differential} defined quantum DP and theoretically derived it under three types of quantum interference: amplitude damping, phase damping, and depolarizing. They demonstrated that while quantum noise introduces computational complexity, it simultaneously provides a natural avenue for DP in quantum systems.

Subsequent studies, such as~\cite{du2022quantum}, implemented DP mechanisms based on depolarization noise to protect the outputs of quantum classifiers, including applications to Lasso regression. Senekane~\cite{senekane2017privacy} proposed a privacy-preserving QML framework using DP and validated its effectiveness on a breast cancer dataset. Furthermore, inherent quantum noise resulting from unwanted or imperfect interactions with the physical environment has recently been considered a viable source for achieving DP~\cite{yang2023improved}.

Despite its potential, current DP-based techniques in QML still face significant limitations in the QFL context. Approaches that rely on fixed noise addition~\cite{watkins2023quantum} or uncontrolled inherent noise~\cite{yang2023improved} struggle with the quantum-specific phenomenon of gradient variance vanishing—also known as the \textit{barren plateau} problem~\cite{mcclean2018barren}. This phenomenon becomes more prominent with an increasing number of qubits, leading to two main challenges: (i) the added noise may be insufficient to prevent adversarial attacks due to reduced gradient variance, and (ii) the optimization process may become trapped in sharp local minima, from which it cannot escape~\cite{do2023revisiting,2018-LR-SharpMinia}.

To overcome these limitations, we propose a novel framework called \textbf{Adaptive Differential Privacy Quantum Federated Learning (ADP-QFL)}, designed to enhance both learning performance and privacy preservation in QFL. Specifically, our approach introduces an adaptive DP mechanism that adjusts the noise level in response to the observed gradient variance. Setting the noise level too high degrades learning performance, while setting it too low exposes the model to privacy risks. Our ADP strategy dynamically calibrates the noise based on each client's local gradient and model parameters, ensuring an optimal balance.

In our framework, clients compute local gradients and use them to add a client-specific amount of noise. Additionally, the similarity between local models and the global model is assessed against a predefined threshold to determine whether a client should contribute its update to the global aggregation. The overall system architecture is hybrid: a trusted classical server is used for aggregation, while quantum clients (or quantum simulators) train circuit parameters using a hybrid quantum–classical optimization method. This approach leverages the benefits of both quantum and classical systems.

The main contributions of this paper are summarized as follows:
\begin{itemize}
    \item We propose a novel federated learning algorithm, \textbf{ADP-QFL}, which introduces an adaptive mechanism to calculate client-specific Gaussian noise. The noise is incorporated during model updates, enabling model compression while significantly enhancing privacy guarantees without compromising model performance.
    
    \item We provide a rigorous theoretical analysis of the convergence behavior of ADP-QFL under both convex and non-convex loss functions. The derived bounds highlight the impact of client drift, the number of participating clients, and global iterations on convergence. Moreover, we demonstrate how strategically introduced noise mitigates the \textit{barren plateau} problem when scaling to high-dimensional quantum models.
    
    \item We validate the effectiveness of ADP-QFL through extensive experiments on benchmark datasets. Our results show that ADP-QFL outperforms baseline methods such as FedAvg~\cite{mcmahan2017communication} and FedBN~\cite{li2021fedbn}, achieving $98.47\%$ accuracy on the MNIST dataset and $83.85\%$ on CIFAR-10, while maintaining communication efficiency under the same DP constraints.
\end{itemize}

The remainder of this paper is organized as follows. Section~\ref{sec:related-work} reviews related work. Section~\ref{sec:preliminaries} introduces preliminaries. Section~\ref{sec:proposed-algorithm} presents the proposed ADP-QFL algorithm. Theoretical convergence analysis is discussed in Section~\ref{sec:convergence-analysis}. Experimental results are reported in Section~\ref{sec:experimental}. Finally, Section~\ref{sec:conclusions} concludes the paper. Key mathematical notations are summarized in Table~\ref{tbl:notation} for ease of reference.

\begin{table}[t]
\centering
\caption{Summary of frequently used notations}
\label{tbl:notation}
\begin{tabular}{ll}
\toprule
\textbf{Symbol} & \textbf{Description} \\ \midrule
$\vartheta$       & Gradient variance \\
$\sigma$          & Noise variance \\
$\tau$            & Number of local learning epochs \\
$U$               & Total number of FL clients \\
$K$               & Number of clients selected per training round \\
$\mathcal{D}_i$   & Dataset held by the $i$-th client \\
$\eta_l$          & Client-side learning rate \\
$\eta_g$          & Server-side learning rate \\
$\eta_g^{t}$      & Server learning rate at communication round $t$ \\
$\theta^{(t)}$    & Global model at round $t$ \\
$\theta_*$        & Optimal global model \\
$\theta^{(t,k)}_i$ & Local model of client $i$ at round $t$ and iteration $k$ \\
$T$               & Total number of communication rounds \\
$\mathcal{B}$     & Mini-batch size \\
$L$               & Smoothness constant (Assumption~\ref{ass:L-smooth}) \\
$\mu$             & Strong convexity constant (Assumption~\ref{ass:strongly-convex}) \\
$\sigma_*^2$      & Bounded heterogeneity at optimum (Assumption~\ref{ass:bound-at-optimum}) \\
$P_\text{sharp}$  & Probability of convergence to sharp minimizers \\
$\mathcal{L}$     & Loss function \\ 
$w$ & Global model parameter vector \\
$\mathcal{X}_i$ & Input space of client $i$ \\
$\mathcal{Y}_i$ & Output (label) space of client $i$ \\
$p_i$ & Data distribution on $\mathcal{X}_i \times \mathcal{Y}_i$ for client $i$ \\
\bottomrule
\end{tabular}
\end{table}

\section{Related Works} \label{sec:related-work}
\subsection{Adaptive Differential Privacy}
In FL, two primary definitions of DP are commonly utilized: Sample-Level Differential Privacy (SL-DP) and Client-Level Differential Privacy (CL-DP).

\begin{itemize}
    \item \textbf{SL-DP}: Under SL-DP, the addition or removal of a single sample from a client's local dataset does not significantly affect the output, as each client holds a collection of data samples. In \cite{wang2022safeguarding}, the authors proposed a three-plane framework for cross-silo privacy protection in FL. This framework comprises a Local Differential Privacy (LDP) perturbation algorithm on the client plane and a data reconstruction algorithm on the edge of the client plane to prevent adversarial entities from inferring or directly accessing client data. Furthermore, \cite{li2023multi} introduced a multi-stage adaptive privacy-preserving algorithm that applies DP to Asynchronous FL (AFL) training, preserving both sample-level and client-level privacy.
    
    \item \textbf{CL-DP}: In CL-DP, the inclusion or exclusion of an entire client does not substantially impact the global model's output. Each client contributes a set of local data samples. In \cite{hu2023federated}, a technique was proposed that selects a subset of coordinates from the client's local model updates and applies Gaussian noise to these selected components. Similarly, \cite{shi2023towards} employed gradient perturbation to address the adverse impact of DP and integrated the Sharpness-Aware Minimization (SAM) optimizer to generate locally flat models, which are more stable and robust against weight perturbations. This method yields smaller local update norms and increased resilience to DP noise, thereby improving model performance. Additionally, \cite{yang2021federated} presented a method that allows clients to submit locally differentially private updates at a configurable privacy level, enabling the aggregation server to optimally combine local parameters across heterogeneous privacy settings.
\end{itemize}

In this work, we propose a client-level adaptive noise generation mechanism to mitigate privacy threats caused by the vanishing gradient variance phenomenon in Quantum Neural Networks (QNNs). To address the issue of model stagnation due to the lack of passive learning forces, we introduce a noise estimation technique based on a model approximation strategy. This approach enables dynamic noise adjustment, ensuring that model performance remains above a predefined threshold while preserving privacy.

Several FL frameworks have been developed to incorporate quantum computing capabilities into distributed learning environments. In \cite{xia2021quantumfed}, the authors introduced a framework that enables multiple quantum nodes to collaboratively train a global Quantum Neural Network (QNN) model using their respective local datasets. In \cite{chehimi2022quantum}, a comprehensive Quantum Federated Learning (QFL) framework was proposed that operates entirely on quantum data, demonstrating superior performance compared to centralized QML architectures.

The study in \cite{huang2022quantum} presented a QFL algorithm aimed at enhancing the communication efficiency of variational quantum algorithms, particularly in scenarios involving decentralized and non-independent and identically distributed (non-IID) quantum data. In the domain of blind quantum computing, \cite{li2021quantum} introduced a private distributed learning framework allowing clients to leverage the computational power of a quantum server without compromising data privacy. This framework incorporates DP techniques to eliminate the need for client trust in the server. However, a practical limitation arises when clients are unable to prepare single qubits for transmission—such a constraint is common in settings involving classical clients or resource-limited quantum devices. Although progress has been made in enhancing privacy within QML, noise introduced during the quantum learning process continues to degrade model accuracy.

While differential privacy has been extensively studied in classical FL systems, relatively few works have focused on adapting DP for QFL. To the best of our knowledge, the only approach that integrates privacy-preserving mechanisms on a quantum platform for comprehensive protection against data leakage and model inversion attacks is presented in \cite{rofougaran2024federated}. Specifically, the authors proposed a hybrid quantum-classical transfer learning strategy that incorporates DP-SGD into FL on Variational Quantum Circuits (VQCs). This integration exploits the expressive capacity and noise sensitivity of VQCs to perform DP training at each local client. Despite its innovation, a notable limitation of this approach is the use of a static noise level, which can lead to increased computational complexity and prolonged convergence times.
{\color{blue}
In contrast, our work introduces an adaptive DP strategy tailored to QFL systems, aimed at mitigating privacy risks associated with gradient vanishing phenomena. By dynamically adjusting the noise level based on model estimations, our method achieves faster convergence, reduces overall computational overhead during training, and effectively avoids barren plateau regions in the QNN landscape.
}
\section{Background and Preliminary} \label{sec:preliminaries}

\subsection{Differential Privacy}

DP is a rigorous mathematical framework that ensures the privacy of individual data records by introducing controlled randomness. Specifically, DP mitigates the risk of inferring whether any particular instance is present in a dataset based on the output of a computation. It provides strong guarantees for preserving data confidentiality in machine learning (ML) algorithms.

\begin{definition}[Differential Privacy]
A randomized algorithm $A$ is said to satisfy $(\epsilon, \delta)$-differential privacy if, for any two adjacent datasets $D$ and $D'$ differing in at most one record, and for any possible subset of outputs $S$, the following condition holds:
\begin{equation} \label{eq:df}
    \Pr[A(D) \in S] \leq e^{\epsilon} \Pr[A(D') \in S] + \delta,
\end{equation}
where $\epsilon$ is the privacy budget and $\delta$ is a small positive constant that quantifies the probability of violating the privacy guarantee. If $\delta = 0$, the guarantee is referred to as \emph{pure} differential privacy; otherwise, it is called \emph{approximate} differential privacy.
\end{definition}

The privacy budget $\epsilon$ is typically chosen to be a small value (e.g., $0.01$), reflecting the desired strength of privacy protection. A smaller $\epsilon$ implies stronger privacy. To achieve $(\epsilon, \delta)$-DP, several noise-adding mechanisms are commonly employed, such as the Laplace mechanism, the Gaussian mechanism, and the exponential mechanism \cite{niu2021adapdp}. Each of these mechanisms introduces noise according to the specific characteristics of the data and the sensitivity of the query function.

In the context of FL, DP can be applied at different granularities. Two commonly used variants are \emph{sample-level DP} and \emph{client-level DP}, which differ based on how adjacent datasets are defined. Sample-level DP provides privacy for individual data points but often necessitates sharing task-specific information that may be vulnerable in privacy-sensitive applications. Consequently, sample-level DP may be insufficient when the training process requires protection of aggregate statistics or task-level information.

In contrast, client-level DP provides stronger guarantees by treating each client’s dataset as a single unit. This approach ensures that participation of any individual client does not significantly influence the output of the learning process. In this work, we specifically focus on client-level DP to protect the entire local dataset of each client during federated model training.

Our proposed framework integrates an adaptive noise estimation technique derived from the model’s estimation process. This method dynamically calibrates the noise to balance privacy protection, convergence speed, and communication efficiency. As a result, our approach enhances robustness while mitigating the trade-offs typically associated with differential privacy in federated settings.

\subsection{Quantum Machine Learning}

\textit{Quantum computing} offers significant computational advantages over classical computing systems by leveraging the principles of quantum mechanics \cite{pan2023experimental}. At its core is the qubit, the fundamental unit of quantum information. Unlike classical bits, which take on a value of either $0$ or $1$, qubits can exist in a superposition of both states simultaneously, enabling a richer computational paradigm.

A qubit can be described as a linear combination of two basis states, denoted $\ket{0}$ and $\ket{1}$, and is mathematically represented as:
\[
\ket{\psi} = \alpha \ket{0} + \beta \ket{1},
\]
where $\alpha$ and $\beta$ are complex probability amplitudes satisfying the normalization condition $|\alpha|^2 + |\beta|^2 = 1$. The computational power of a quantum system increases exponentially with the number of qubits, making qubit scalability a crucial factor. As of the time of writing, practical quantum computers support on the order of tens to hundreds of qubits, with IBM's most advanced system featuring 443 qubits \cite{ibm_quantum}.

\textit{Quantum Neural Networks} (QNNs) are quantum analogs of classical neural networks designed to process quantum data using quantum gates and circuits. While their functional structure is similar to classical neural networks, QNNs operate in a Hilbert space and manipulate quantum states using unitary transformations. The fundamental unit of QNNs is the \textit{quantum perceptron}, which generalizes the classical perceptron concept under quantum mechanics.

One of the significant challenges in training deep QNNs is the vanishing gradient problem—a phenomenon also observed in classical deep learning. This issue arises when gradients become exceedingly small during backpropagation through many layers, hindering effective learning. Although classical neural networks mitigate this problem through the use of non-linear activation functions, quantum models lack such mechanisms due to the unitarity of quantum operations. Thus, QML requires alternative strategies to address this limitation.

The work by McClean \textit{et al.} \cite{mcclean2018barren} identified this phenomenon in QML as \textit{barren plateaus}, where the expected gradient vanishes exponentially with system size. Their study showed that the probability of encountering a barren plateau grows exponentially with the number of qubits, rendering gradient-based optimization infeasible for large-scale quantum systems. Although choosing optimal initial parameters can alleviate this issue in small-scale QNNs, it remains a fundamental obstacle for scalability.

Figure~\ref{fig:variance-qubit} illustrates the exponential decay in gradient variance with increasing qubit count and circuit depth, emphasizing the severity of the barren plateau problem.

\begin{figure}[t]
    \centering
    \begin{subfigure}[b]{0.235\textwidth}
        \centering
        \includegraphics[width=\textwidth]{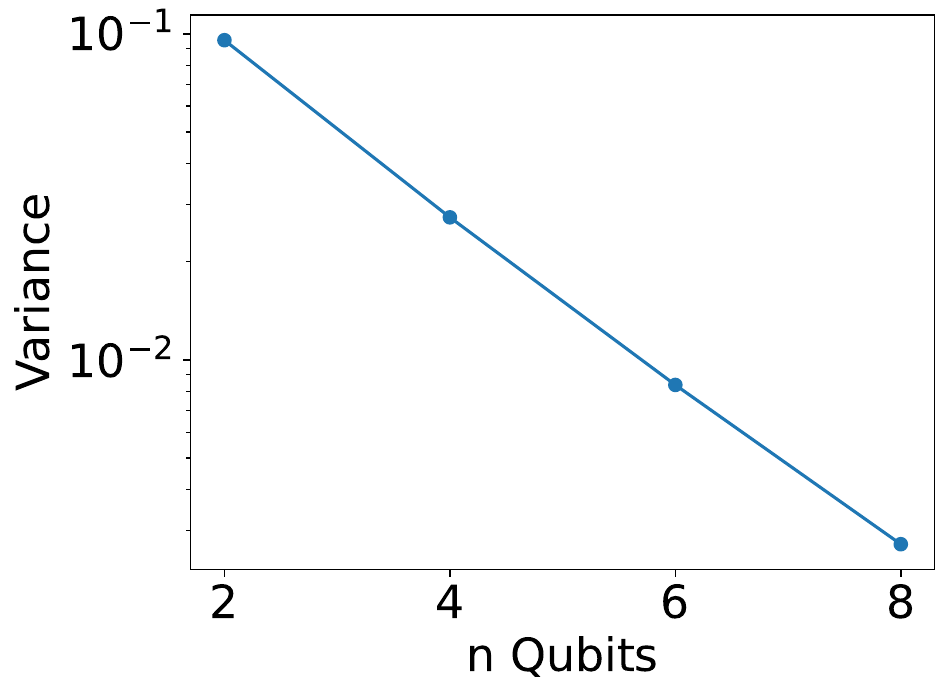}
        \caption{Exponential decay of variance}
        \label{fig:variace-nQubits}
    \end{subfigure}
    \hfill
    \begin{subfigure}[b]{0.235\textwidth}
        \centering
        \includegraphics[width=\textwidth]{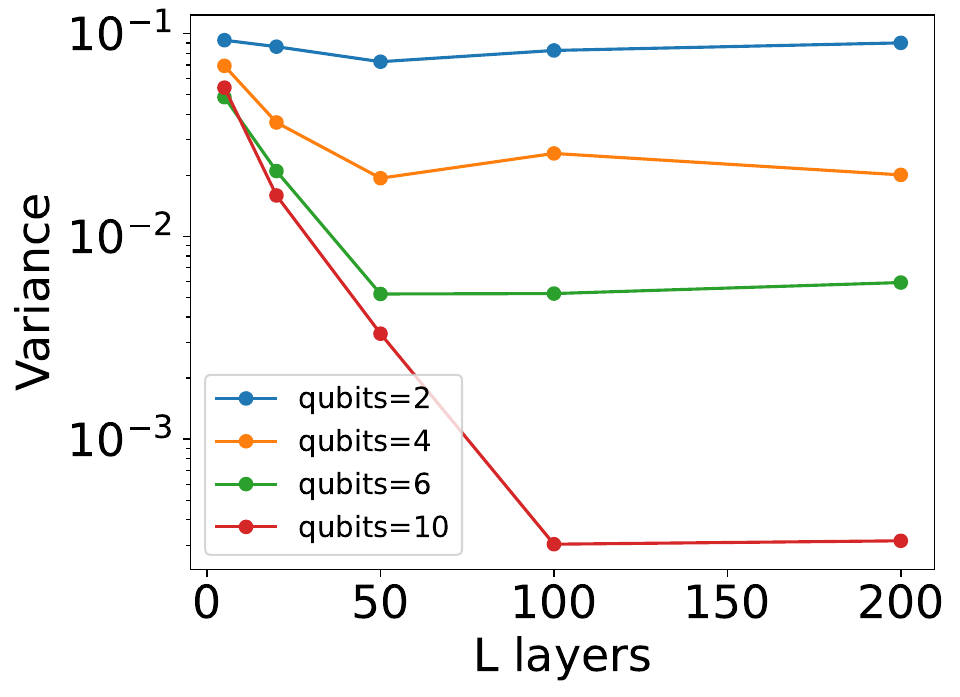}
        \caption{Convergence to 2-design limit}
        \label{fig:variance-layers}
    \end{subfigure}
    \caption{Demonstration of rapid variance decay in QNNs with increasing number of qubits and layers, indicating the presence of barren plateaus.}
    \label{fig:variance-qubit}
\end{figure}

\section{Quantum Federated Learning with Client-level Differential Privacy} \label{sec:proposed-algorithm}

In this section, we propose a novel approach for adaptively regulating DP levels in Quantum Federated Learning (QFL), termed \textbf{ADP-QFL}. The ADP-QFL framework offers a simple yet effective mechanism for enhancing privacy in federated learning systems by introducing customizable noise into the local model updates during transmission.

Specifically, ADP-QFL incorporates a noise injection process into the model update phase, which also enables model compression via adaptive estimation. This dual-purpose mechanism simultaneously strengthens privacy guarantees and reduces communication overhead, all while preserving model performance.

Moreover, we demonstrate that strategically injected noise allows the QFL framework to better adapt to the heterogeneous nature of client data distributions. This adaptive behavior improves both learning efficacy and communication efficiency, making ADP-QFL particularly well-suited for privacy-preserving collaborative learning in quantum-enabled edge environments.

\subsection{Problem Formulation} \label{sec:problem-formulation}

{\color{blue}
Federated Learning (FL) is a decentralized paradigm in which a central server coordinates the training of a shared global model without directly accessing raw data from individual clients~\cite{mcmahan2017communication}. Each client $i \in \{1, \dots, U\}$ maintains a private dataset $\mathcal{D}_i = \{(x_{i,j}, y_{i,j})\}_{j=1}^{N_i}$ and seeks to collaboratively minimize a global loss function defined as:
\begin{align} \label{eq:global-loss}
    &\min_{w \in \mathbb{R}^d} \quad f(w) := \frac{1}{U} \sum_{i=1}^{U} f_i(w), \\
    &\text{where} \quad
    f_i(w) := \mathbb{E}_{(x,y) \sim p_i}[\ell(w; x, y)]. \notag
\end{align}
Here, $w \in \mathbb{R}^d$ denotes the global model parameters, $\ell(w; x, y)$ is the sample-wise loss function, and $p_i$ represents the local data distribution of client $i$. In practice, $p_i \ne p_j$ for $i \ne j$, leading to a non-IID setting that challenges convergence and generalization.

This work considers a Quantum Federated Learning (QFL) scenario, where each client employs quantum neural networks (QNNs) for local model training. The QFL system, illustrated in Figure~\ref{fig:QFL_architecture}, integrates user-level differential privacy (UDP) and a model estimation mechanism to ensure privacy and communication efficiency. During each communication round, a subset of clients $\mathcal{S}_t \subseteq \mathcal{C}$ is selected to participate. Each client performs local training involving:
\begin{enumerate}
    \item \textit{Quantum-based function estimation},
    \item \textit{Adaptive noise injection based on training stage}, and
    \item \textit{Local model weight computation using meta-learning}.
\end{enumerate}

To select informative clients while maintaining fairness, we employ stratified sampling guided by the Fisher Information Matrix (FIM). Each client $u \in \mathcal{C}$ estimates its FIM as:
\[
\mathcal{I}_u(\theta) = \mathbb{E}_{\mathcal{D}_u} \left[
\nabla_\theta \log p(\mathcal{D}_u | \theta) \nabla_\theta \log p(\mathcal{D}_u | \theta)^\top
\right],
\]
with selection probability:
\[
P(u) = \frac{\operatorname{Tr}(\mathcal{I}_u(\theta))}{\sum_{v \in \mathcal{C}} \operatorname{Tr}(\mathcal{I}_v(\theta))}.
\]

Local training proceeds by computing client-specific updates. Each client computes the first-order gradient:
\[
\widetilde{\theta}_u = \theta_u - \eta \nabla_{\theta_u} \mathcal{L}(f_{\theta_u}; \mathcal{D}^s_u),
\]
followed by a second-order meta update:
\begin{align}
\phi_u = \theta_u - \beta \left[\mathbb{I} - \eta \nabla^2_{\widetilde{\theta}_u} \mathcal{L}(f_{\widetilde{\theta}_u}; \mathcal{D}^s_u)\right]
\left[\nabla_{\widetilde{\theta}_u} \mathcal{L}_u(f_{\widetilde{\theta}_u}; \mathcal{D}^q_u) + G_t\right],
\end{align}
where $\mathcal{D}^s_u$ and $\mathcal{D}^q_u$ are the support and query sets, respectively, and $G_t$ is the injected Gaussian noise for differential privacy.

To adapt privacy protection throughout training, we propose an adaptive noise scheme:
\[
\sigma_t^2 = \sigma_0^2 \cdot \frac{1}{1 + \alpha t},
\]
where $\sigma_0^2$ is the initial variance and $\alpha > 0$ is the decay rate. This ensures larger noise during early rounds—when gradients are large and less sensitive—and reduced noise in later stages for precise fine-tuning.

Only clients whose local model estimates satisfy:
\[
\left\| \hat{\phi}_u - \phi_u \right\| \leq \frac{b}{\lambda}
\]
are selected for server aggregation, where $b$ is the estimation error and $\lambda$ is a regularization parameter.

\begin{figure}[t]
\centering
\includegraphics[width=0.95\linewidth]{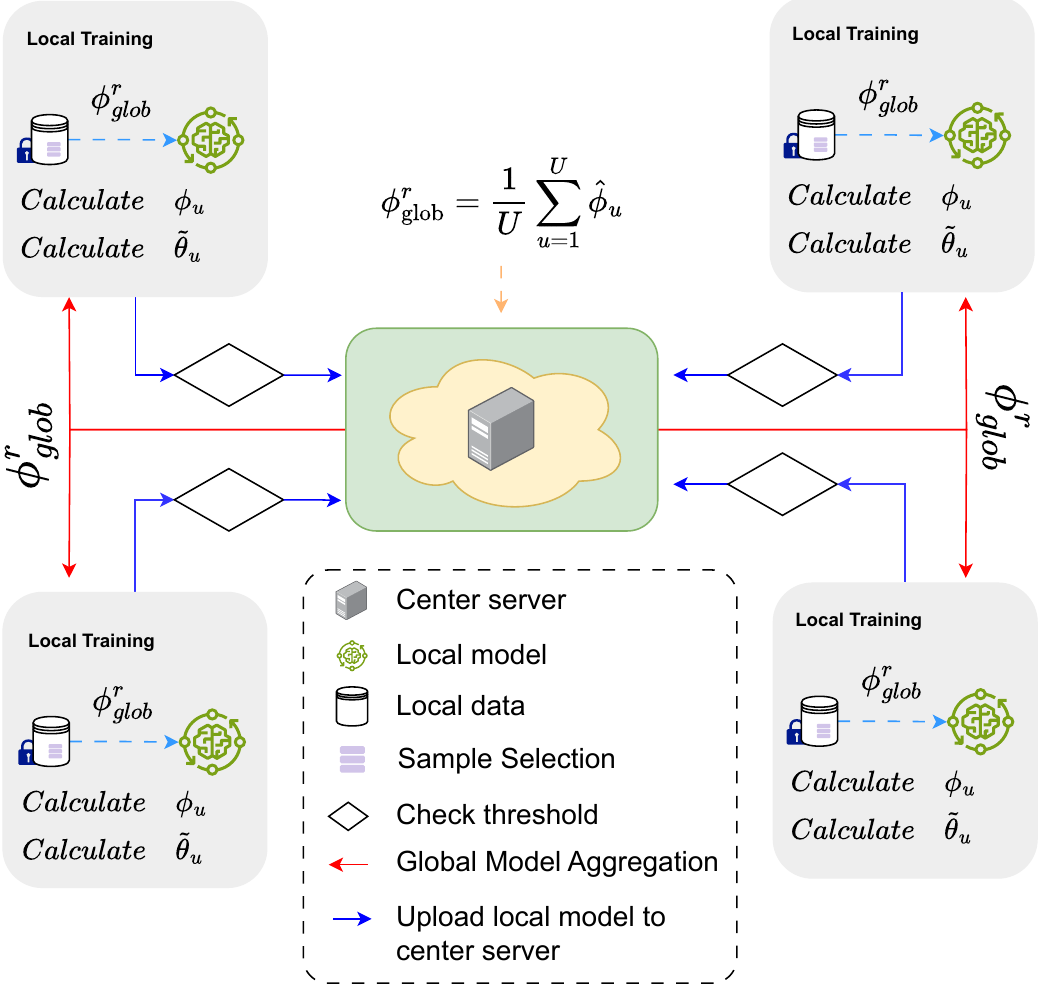}
\caption{Overview of the proposed ADP-QFL framework. In each round, selected clients train locally using quantum models, apply adaptive noise, and transmit filtered model updates to the server.}
\label{fig:QFL_architecture}
\end{figure}
}

\subsection{Architecture}

The proposed architecture comprises three main phases implemented across the server and participating clients, as depicted in Figure~\ref{fig:QFL_architecture} and detailed in Algorithm~\ref{alg:PerTDP}. Our privacy-preserving QFL framework integrates User-level Differential Privacy (UDP) to ensure that each client's private data remains protected throughout the training process.

\subsubsection{Initialization on the Server Side}

At the beginning of the training process, the server initializes the global model parameters along with the key hyperparameters required for privacy control. These include the noise variance $\sigma^2$, the estimation error bound $b$, and the regularization coefficient $\lambda$. The parameters $\sigma$, $b$, and $\lambda$ are used to balance the trade-off between model utility and privacy protection. Further details on the role and impact of these parameters are discussed in Subsection~\ref{subsec:d}.

\subsubsection{Local Training}

For each communication round, a subset of clients is selected. Each participating client initializes its local model using the received global model broadcasted by the server. Subsequently, each client selects mini-batches from its local data pool for training, as outlined in \textit{Algorithm~\ref{alg:PerTDP}, line~\ref{lst:line:query_set}}.

The QFL model on each client computes gradients based on the local training dataset. Client-specific parameters are derived via a meta-update procedure, as described in \textit{Algorithm~\ref{alg:PerTDP}, lines~\ref{lst:line:calc_gradient}-\ref{lst:line:calc_meta_para}}.

To preserve differential privacy, the noise generation process is integrated with a utility-aware sampling mechanism. This step determines optimal parameter values for both noise injection and sampling, aiming to maximize utility while protecting client data. The framework performs multiple iterations using the selected mechanism and computed parameters to achieve effective privacy-preserving training.

\textcolor{blue}{Following noise integration, sparse estimation techniques are employed to select and compress local models, as indicated in \textit{Algorithm~\ref{alg:PerTDP}, line~\ref{lst:line:calc_estimate}}. These techniques enable adaptive learning of new tasks or classes without degrading the performance on previously acquired knowledge. The sparsely extracted model information is then used in the subsequent model selection phase to further enhance efficiency and communication effectiveness.}

\subsubsection{Model Aggregation at the Server}

After local training and noise perturbation on the selected clients, each client uploads its perturbed local model $\hat{\phi}_{u}^t$ to the central server. In this work, we consider a synchronous federated learning framework, wherein the server waits until all selected clients have uploaded their local models before proceeding to the next round.

The server then aggregates the received local models $\{\hat{\phi}_{1}^t, \hat{\phi}_{2}^t, \dots, \hat{\phi}_{K}^t\}$ to obtain the updated global model $\phi_{\text{glob}}^t$. Here, $\hat{\phi}_{i}^t$ denotes the locally updated model from the $i$-th client at communication round $t$. The global model is computed as the average of all received local models. Subsequently, the server broadcasts the updated global model $\phi_{\text{glob}}^t$ to the selected clients for use in the next training iteration. This synchronous update mechanism significantly reduces the computational and storage overhead associated with model aggregation, while ensuring privacy and model consistency.

\begin{algorithm}[t]
\caption{QFL with Task-level Differential Privacy}
\label{alg:PerTDP}
\begin{algorithmic}[1]
\REQUIRE Server-Side Aggregation \label{req:server}
\STATE Initialize global model $\phi^0_{\textrm{glob}}$.
\STATE Initialize noise variance $\sigma^2 = \frac{8T(2L + b)^2 \log(1/\delta)}{K^2 \epsilon^2}$, estimation error $b$, and regularization parameter $\lambda$.
\FOR{each communication round $r = 1,2,\dots$}
    \STATE Receive client models $\{\hat{\phi}_u^r\}_{u=1}^U$ from selected clients.
    \STATE Aggregate to update global model:
    \[
        \phi^r_\textrm{glob} = \frac{1}{U} \sum_{u=1}^{U} \hat{\phi}_u^r.
    \]
    \STATE Broadcast $\phi^r_\textrm{glob}$ and noise variance $\sigma^2$ to selected clients.
    \STATE Proceed with local training on clients.
\ENDFOR
\REQUIRE Client-Side Local Training
\STATE Set initial model $\theta_u = \phi^r_{\textrm{glob}}$.
\STATE Sample support set $\mathcal{D}^{s}_{u} = (\textbf{X}^{s}_{u}, \textbf{Y}^{s}_{u})$ and query set $\mathcal{D}^{q}_{u} = (\textbf{X}^{q}_{u}, \textbf{Y}^{q}_{u})$ from local dataset $\mathcal{D}_u$. \label{lst:line:query_set}
\FOR{$k = 1$ to $\tau$}
    \STATE Compute local gradient using support set:
    \[
        \nabla_{\theta_u}\mathcal{L}(f_{\theta_u}; \mathcal{D}^{s}_{u}).
    \] \label{lst:line:calc_gradient}
    \STATE Update intermediate parameter:
    \[
        \widetilde{\theta}_u = \theta_u - \eta \nabla_{\theta_u} \mathcal{L}(f_{\theta_u}; \mathcal{D}^s_u).
    \]
    \STATE Compute meta-based update:
    \begin{align}
        \phi_u = \theta_u - \beta & \left[\mathbb{I} - \eta \nabla^2_{\widetilde{\theta}_u} \mathcal{L}(f_{\widetilde{\theta}_u}; \mathcal{D}^s_u)\right] \notag \\
        & \times \left[\nabla_{\widetilde{\theta}_u} \mathcal{L}_u(f_{\widetilde{\theta}_u}; \mathcal{D}^q_u) + G_t\right].
    \end{align} \label{lst:line:calc_meta_para}
    \STATE Apply sparsification and estimate:
    \[
        \hat{\phi}_u \quad \text{such that} \quad \|\hat{\phi}_u - \phi_u\| \leq \frac{b}{\lambda}.
    \] \label{lst:line:calc_estimate}
\ENDFOR
\STATE Upload the perturbed local model $\hat{\phi}_u$ to the server for aggregation.
\end{algorithmic}
\end{algorithm}

\subsection{Client-Level Noise Calculation}
\label{subsec:d}

In a differentially private federated learning (DP-FL) framework, Gaussian noise is added to the model parameters at the client side to preserve privacy. Specifically, during each global communication round $t$, the $u$-th client ($u \in \{1, \ldots, K\}$) adds Gaussian noise drawn from $\mathcal{N}(0, \sigma_u^2)$ to its locally trained parameters before transmitting the perturbed model to the server.

It is assumed that the noises introduced by different clients across communication rounds are independent and identically distributed (IID). Moreover, each noise component is considered to be independent of the model parameters prior to its application, thereby ensuring the integrity of the privacy mechanism.

\begin{lemma}[Utility Guarantee for Differential Privacy \cite{2022-MeL-TaskLevelDP}]
\label{lem:added-noise}
To ensure $(\epsilon, \delta)$-user-level differential privacy (UDP) after $T$ global communication rounds, the minimum variance $\sigma_u^2$ of the Gaussian noise added to the local parameters at each selected client is given by:
\begin{align}
    \sigma^2 = \frac{8T(2L + b)^2 \log(1/\delta)}{K^2 \epsilon^2},
\end{align}
where $T$ is the total number of global iterations, $L$ is the Lipschitz constant of the loss function, $b$ denotes the estimation error, $K$ is the number of participating clients per round, and $\epsilon$ is the privacy budget.
\end{lemma}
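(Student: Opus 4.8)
The plan is to establish Lemma~\ref{lem:added-noise} by tracking the privacy loss accumulated over the $T$ communication rounds and then inverting the resulting $(\epsilon,\delta)$ relationship to solve for the required per-client noise variance $\sigma^2$. First I would identify the \emph{sensitivity} of the per-round client update map. Since each client transmits the perturbed meta-update $\hat{\phi}_u$, I would bound the $\ell_2$-sensitivity of this quantity with respect to the addition or removal of a single client (client-level adjacency). The $L$-Lipschitz (equivalently, bounded-gradient) assumption on the loss controls the contribution of the gradient terms $\nabla_{\widetilde{\theta}_u}\mathcal{L}$, while the estimation-error term $b$ from the sparsification step (line~\ref{lst:line:calc_estimate}) accounts for the gap $\|\hat{\phi}_u-\phi_u\|\le b/\lambda$. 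Combining these, I expect the single-round sensitivity to scale like $2L+b$, which explains the $(2L+b)^2$ factor appearing inside the numerator.

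Next I would invoke the Gaussian mechanism: for a query of $\ell_2$-sensitivity $\Delta$, adding noise $\mathcal{N}(0,\sigma^2)$ yields $(\epsilon_0,\delta_0)$-DP with the standard calibration $\sigma \ge \Delta\sqrt{2\log(1.25/\delta_0)}/\epsilon_0$. The key step is then \emph{composition} across the $T$ rounds. I would apply an advanced composition bound (or the moments accountant / Rényi-DP composition) so that the total privacy budget degrades gracefully rather than linearly; the factor $T$ in the numerator together with the $\log(1/\delta)$ term is the signature of advanced composition, where $T$ independent Gaussian releases compose to give an overall $(\epsilon,\delta)$ guarantee. The normalization by $K^2$ in the denominator reflects the averaging aggregation $\phi^r_{\text{glob}}=\frac{1}{U}\sum_u\hat{\phi}_u$ (with $K$ participating clients), since averaging over $K$ contributors attenuates any single client's influence by a factor of $1/K$, reducing the effective sensitivity and hence the required variance quadratically.

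Finally I would assemble these pieces: substitute the per-round sensitivity $\Delta=2L+b$ and the per-round privacy cost into the chosen composition theorem, set the cumulative budget equal to $\epsilon$ with failure probability $\delta$, and solve the resulting inequality for $\sigma^2$. Rearranging should yield $\sigma^2=\frac{8T(2L+b)^2\log(1/\delta)}{K^2\epsilon^2}$, matching the stated expression; the numerical constant $8$ should emerge from the combination of the Gaussian-mechanism constant and the advanced-composition constant.

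The main obstacle I anticipate is making the composition argument rigorous and extracting the precise constant $8$. Advanced composition and the moments accountant give bounds of the same \emph{form} but differ in their constants and in the exact dependence on $\delta$ versus $\delta'$ (the slack term introduced by composition), so care is needed to choose the accounting method that reproduces this clean closed form. A secondary subtlety is justifying the sensitivity reduction by $K$: this requires verifying that client-level adjacency interacts correctly with the averaging step and that the noise added per client (rather than centrally) aggregates to the claimed effective variance. Since the result is quoted from \cite{2022-MeL-TaskLevelDP}, I would align the adjacency definition and composition framework with that reference to ensure the constants transfer faithfully to the present QFL setting.
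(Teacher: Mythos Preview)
The paper does not actually prove Lemma~\ref{lem:added-noise}; it is stated as an imported result from the cited reference \cite{2022-MeL-TaskLevelDP} and used as a black box (the variance formula is simply initialized in Algorithm~\ref{alg:PerTDP}, line~2). There is therefore no proof in the paper to compare your proposal against.

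That said, your outline---bounding the per-round $\ell_2$-sensitivity of the transmitted update by $2L+b$, applying the Gaussian mechanism, using advanced composition over $T$ rounds to produce the $T\log(1/\delta)$ factor, and exploiting the $1/K$ averaging to obtain the $K^2$ in the denominator---is the standard route for deriving noise-calibration formulas of this shape, and is almost certainly what the cited reference does. Your caveats about the constant $8$ and about matching the adjacency notion to the cited paper are exactly the right places to be careful; but since the present paper defers the argument entirely to \cite{2022-MeL-TaskLevelDP}, there is nothing further to reconcile here.
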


\noindent Lemma~\ref{lem:added-noise} indicates that the noise variance $\sigma^2$ must be carefully calibrated to guarantee $(\epsilon, \delta)$-differential privacy during the federated learning process.

In practice, quantum neural networks (QNNs) trained on parameterized quantum circuits may experience gradient variance loss, which can degrade model performance. Interestingly, the addition of noise not only enhances privacy but can also introduce beneficial variance into the training dynamics, thereby improving convergence and generalization. This phenomenon aligns with the principles of adaptive noise injection in differentially private learning systems, as discussed in \cite{watkins2023quantum}.

Hence, the noise calibration strategy described in Lemma~\ref{lem:added-noise} is particularly advantageous when applying QNNs in federated learning scenarios involving a large number of quantum devices. The capability to regulate noise during training contributes both to preserving privacy and to stabilizing optimization under the inherent uncertainties of quantum computation.

{\color{blue}
In this study, we propose an adaptive noise mechanism that dynamically adjusts the injected noise based on the current training stage. Specifically, the noise variance at iteration $t$ is defined as:
\[
\sigma_t^2 = \sigma_0^2 \cdot \frac{1}{1 + \alpha t},
\]
where $\sigma_0^2$ denotes the initial noise variance, $\alpha > 0$ is a decay rate hyperparameter controlling the speed at which noise decreases, and $t$ represents the current iteration index. 

This formulation captures the intuition that in the early stages of training, when the model parameters are far from optimal and gradient magnitudes are large, adding higher noise levels has minimal impact on overall model utility. As training progresses and the model approaches convergence, gradients become smaller and more sensitive to perturbation; thus, the noise variance is reduced accordingly. 

By adapting the noise level dynamically, this strategy provides stronger privacy protection in the early rounds (where model updates are more diverse and harder to trace) and better model accuracy in the later rounds (where smaller updates are critical for fine-tuning). This approach ensures a better privacy-utility trade-off throughout the entire training process.
}

\subsection{Gradient Estimation for Transmission Model Compression}
\label{subsec:grad-estimation}

Effective management of the privacy budget is essential in the implementation of DP, as an excessive expenditure of the budget can significantly compromise the utility of the underlying private dataset. In this work, we extend the standard DP framework by introducing a mechanism to optimally allocate privacy budgets across data consumers. Rather than transmitting the complete local model with added noise to the server for aggregation, we propose a gradient estimation approach to reduce the communication burden by compressing the transmitted model.

The following definition formalizes the phenomenon of *privacy budget explosion*, which occurs as the number of participating clients and training rounds increases.

\begin{definition}[Privacy Budget Explosion]
\label{def:privacy-budget}
The global privacy budget $\epsilon_{\mathrm{glob}}$ increases significantly as the number of clients and communication rounds grows, and is computed as:
\begin{equation}
    \epsilon_{\mathrm{glob}} = \sum_{t=1}^{\tau} \sum_{u=1}^{K} \epsilon_u^t,
\end{equation}
where $\epsilon_u^t$ denotes the privacy budget consumed by client $u$ during communication round $t$, and $\tau$ is the total number of global training rounds.
\end{definition}

\noindent As indicated in Definition~\ref{def:privacy-budget}, directly adding noise to the local model and transmitting it to the server leads to an accumulation of privacy cost over time, potentially exhausting the available privacy budget. To mitigate this issue, we propose a *model estimation* strategy that filters and compresses the local models before transmission to the server.

Inspired by the well-known Random Sample Consensus (RANSAC) method \cite{fischler1981random}, our proposed Adaptive Differential Privacy for Quantum Federated Learning (ADP-QFL) algorithm employs two main components for model estimation: (i) data sampling and (ii) model selection. Initially, subsets of the training data are randomly sampled to construct candidate hypotheses. Each local model is then trained and compared to the global model. The difference between local and global model parameters is computed and compared to a predefined threshold.

This threshold, acting as a filter, serves to eliminate local models that significantly deviate from the global objective, thereby reducing both communication bandwidth and transmission time. The model estimation process is repeated iteratively, and the best-performing local model (based on the threshold criterion) is selected for transmission to the server.

In the ADP-QFL algorithm (see \textit{Algorithm~\ref{alg:PerTDP}}, line~\ref{lst:line:calc_estimate}), the model estimation condition is formalized as:
\begin{align}
    \left\| \hat{\phi}_u - \phi_u \right\| \leq \frac{b}{\lambda},
\end{align}
where $b$ represents the estimation error, $\lambda$ is a regularization parameter, $\phi_u$ denotes the original local model parameters, and $\hat{\phi}_u$ is the estimated model after local training. The norm $\left\| \hat{\phi}_u - \phi_u \right\|$ quantifies the deviation between pre- and post-training model weights.

Only those local models whose deviation remains within the specified threshold are selected for aggregation. This mechanism not only reduces communication overhead but also improves global model quality by filtering out noisy or unstable local updates, particularly in large-scale federated learning systems with many clients and local updates.

\section{Theoretical Analysis}
\label{sec:convergence-analysis}

In this section, we present a theoretical analysis of the proposed algorithm, focusing on gradient variance reduction in quantum federated learning (QFL). Furthermore, we establish convergence guarantees under both convex and non-convex optimization settings.

\subsection{Assumptions and Definitions}
\label{subsec:assumptions}

To facilitate the convergence analysis, we adopt the following standard assumptions commonly used in federated learning literature.

\begin{assumption}[$L$-Smoothness]
\label{ass:L-smooth}
Each local objective function $F_u(w)$ is differentiable and $L$-smooth for all clients $u \in \{1, 2, \ldots, U\}$. That is, for all $w, w' \in \mathbb{R}^d$,
\begin{equation}
    \left\| \nabla F_u(w) - \nabla F_u(w') \right\| \leq L \left\| w - w' \right\|.
\end{equation}
\end{assumption}

\begin{assumption}[$\mu$-Strong Convexity]
\label{ass:strongly-convex}
Each local objective function $F_u(w)$ is differentiable and $\mu$-strongly convex for all clients $u \in \{1, 2, \ldots, U\}$. That is, for all $w, w' \in \mathbb{R}^d$,
\begin{equation}
    \left( \nabla F_u(w) - \nabla F_u(w') \right)^\top (w - w') \geq \mu \left\| w - w' \right\|^2.
\end{equation}
\end{assumption}

\begin{assumption}[Bounded Gradient Dissimilarity at Optimum]
\label{ass:bound-at-optimum}
The heterogeneity among client datasets is bounded at the global optimum $w^*$. Specifically,
\begin{equation}
    \frac{1}{U} \sum_{u=1}^{U} \left\| \nabla F_u(w^*) \right\|^2 \leq \sigma_*^2,
\end{equation}
where $\sigma_*^2$ is a constant that quantifies the level of dissimilarity across client gradients at the optimal point.
\end{assumption}

\subsection{Gradient Variance Reduction in Quantum Federated Learning}
\label{subsec:gradient-variance}

We begin by formalizing the decomposition of stochastic gradients used in our quantum federated learning (QFL) framework.

\begin{definition}[Stochastic Gradient Decomposition]
\label{def:single-gradient}
Let $g_i^t$ denote the stochastic gradient at iteration $t$ computed using a reference data point $x_i$. This gradient can be decomposed into two components:
\begin{equation}
    g_i^t = \bar{g}^t + \Delta g_i^t,
\end{equation}
where $\bar{g}^t$ is the deterministic (general) gradient representing the expected behavior across all data perturbations, and $\Delta g_i^t$ is the variance component capturing the random perturbation introduced by using individual data points.
\end{definition}

Definition~\ref{def:single-gradient} leads to the following theorem, which characterizes the variance reduction effect when using $n$-qubit quantum gradient estimation. This result is grounded on the unbiased gradient assumption discussed in \cite[Assumption 2]{2020-FL-FedNova}.

\begin{theorem}[Unbiased $n$-Qubit Gradient Estimator]
\label{theorem:batch-unbiased-qubit-grad}
Let $\bar{g}^t$ be the expected gradient defined in Definition~\ref{def:single-gradient}, and let $g^t_{\text{$n$-Qubit}}$ denote the batch gradient estimated using an $n$-qubit quantum system. Assuming that each point-wise gradient has bounded variance $\sigma^2$, as in \cite{2020-FL-FedNova}, the stochastic gradient obtained from a mini-batch $\mathcal{B}$ satisfies:
\begin{equation}
    \mathbb{E}_{(x_i, y_i) \sim \mathcal{B}} \left[ \left\| \bar{g}^t - g^t_{\text{$n$-Qubit}} \right\|^2 \right] \leq \frac{3\sigma^2}{2^{2n} - 1}.
\end{equation}
This inequality shows that the variance of the $n$-qubit gradient estimator decreases exponentially with respect to the number of qubits $n$.
\end{theorem}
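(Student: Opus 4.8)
The plan is to reduce the theorem to a second-moment estimate for the stochastic perturbation term and then exploit the statistical structure of the $n$-qubit estimator to extract the exponential $2^{2n}-1$ factor. First I would invoke Definition~\ref{def:single-gradient} to write the batch estimator as $g^t_{n\text{-Qubit}} = \bar g^t + \frac{1}{|\mathcal{B}|}\sum_{i\in\mathcal{B}} \Delta g_i^t$, so that the quantity to be bounded becomes $\mathbb{E}_{\mathcal{B}}\big[\|\frac{1}{|\mathcal{B}|}\sum_{i}\Delta g_i^t\|^2\big]$. Using the unbiasedness assumption of~\cite[Assumption 2]{2020-FL-FedNova}, namely $\mathbb{E}[\Delta g_i^t]=0$, together with the assumed independence of the per-sample perturbations, the cross terms vanish and this expectation collapses to an average of the per-sample second moments $\mathbb{E}[\|\Delta g_i^t\|^2]$.

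The decisive step is to evaluate this per-sample second moment for the quantum gradient estimator rather than for a purely classical one, since the classical argument alone would only yield the $\sigma^2/|\mathcal{B}|$ scaling and never the exponential denominator. Here I would model $\Delta g_i^t$ through the measurement operator used in the parameter-shift estimation of $\partial_\theta \langle H\rangle$ and average over the circuit ansatz treated as a unitary $2$-design. Applying the standard second-moment (Weingarten) integration formula for the Haar measure on $U(2^n)$ — which for quadratic functions of the unitary produces denominators of exactly $d^2-1=2^{2n}-1$ with $d=2^n$ — transforms the operator-valued second moment into an expression of the form $(\text{trace terms})/(2^{2n}-1)$. This is precisely the mechanism responsible for the exponential suppression claimed in the theorem, and it is the same calculus underlying the barren-plateau phenomenon of~\cite{mcclean2018barren}.

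Finally I would bound the residual trace and operator-norm terms by the per-point variance bound $\sigma^2$ of~\cite{2020-FL-FedNova}, collecting the leftover algebraic constants into the factor $3$; this constant is expected to originate either from the three-generator (Bloch/Pauli) structure of the single-qubit rotations or from grouping the surviving Weingarten terms via an inequality of the form $\|a+b+c\|^2 \le 3(\|a\|^2+\|b\|^2+\|c\|^2)$. Combining the vanishing of the cross terms, the $1/(2^{2n}-1)$ factor from the $2$-design integration, and the $\sigma^2$ bound on the trace terms yields the stated inequality $\mathbb{E}_{\mathcal{B}}[\|\bar g^t - g^t_{n\text{-Qubit}}\|^2]\le 3\sigma^2/(2^{2n}-1)$.

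The hardest part will be the middle step: committing to an explicit operator model of the $n$-qubit estimator and carrying the $2$-design integration through carefully enough to produce exactly $2^{2n}-1$ in the denominator (and not, say, $2^{2n}$ or $2^n$), while verifying that the accumulated constant is genuinely bounded by $3$. In particular, one must confirm that the circuit forms at least a unitary $2$-design so that the Haar second-moment formulas legitimately apply, and must check that the mini-batch unbiasedness and independence truly decouple from the randomness of the quantum measurement, since in the quantum setting these two sources of variance are intertwined and cannot be separated by the classical variance-of-an-average argument alone.
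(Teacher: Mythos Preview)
Your proposal is sound and in fact more rigorous than the paper's own argument, but the route is different. The paper does not carry out any Weingarten/$2$-design integration itself; it simply \emph{cites} the barren-plateau variance formula of \cite{mcclean2018barren} (with the numerator $f(\rho,H,U)$ taken from \cite{2022-Quantum-PRX}) as a black box, so that $\mathrm{Var}[\nabla\mathcal{L}]=f(\rho,H,U)/(2^{2n}-1)$ is assumed rather than derived. The factor $3$ is then obtained not from Pauli/Bloch structure or a triangle inequality as you conjecture, but by a normalization trick: the paper declares that the case $n=1$ ``represents similar to conventional 2-bit systems,'' so the classical variance $\sigma^2$ is identified with $f/(2^{2}-1)=f/3$, whence $f=3\sigma^2$ and the $n$-qubit variance becomes $3\sigma^2/(2^{2n}-1)$. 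A short appeal to the $L^2$ weak law of large numbers over the mini-batch then finishes the argument. What your approach buys is an honest derivation of the $2^{2n}-1$ denominator and an explicit audit of the constant; what the paper's approach buys is brevity, at the cost of leaning entirely on cited results and on the somewhat heuristic identification of $n=1$ with the classical baseline. If you intend to follow the paper, you can skip the operator model and the Haar integration entirely and simply invoke the cited variance formula plus the $n=1$ calibration.
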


\begin{proof}
Please refer to Appendix~\ref{Proof:Theorem2} for a complete proof.
\end{proof}

\subsection{Convergence Analysis}
\subsubsection{Convergence under the Convex Setting}

In this subsection, we present the convergence theorem of the proposed algorithm in a convex setting. We also provide a detailed analysis of the contributing components that influence the overall convergence rate.

\begin{theorem}[Convergence Rate under Convexity]
\label{theorem:loss-decrease}
Suppose that Assumptions~\ref{ass:L-smooth} and~\ref{ass:strongly-convex} hold, and that all clients participate fully using full-batch gradients. Let the learning rate $\eta$ satisfy $\frac{1}{2\sqrt{6}\tau^2L} < \eta < \frac{1}{6\tau L}$. Then, the sequence $\{w^r\}$ generated by the ADP-QFL algorithm satisfies the following inequality:
\begin{align}
    & \mathcal{L}(\bar{\theta}^{(R)}) - \mathcal{L}(\theta^*) \notag\\
    & \leq \mathcal{O}\left( \frac{\| \theta^{(0)} - \theta^* \|^2}{\sum_{t=0}^{T-1} \eta_l \tau} \right) 
    + \mathcal{O}\left( \eta_l^2 \tau (\tau - 1) L \sigma_*^2 \right) \notag \\
    & \quad + \mathcal{O}\left( \eta_l \tau \sigma_*^2 \right)
    + \mathcal{O}\left( (2\tau^2 + 3\tau + 1)\eta_l^2 \cdot \frac{3\sigma^2}{2^{2n} - 1} \right).
\end{align}
\end{theorem}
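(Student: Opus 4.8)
The plan is to follow the standard perturbed-iterate framework for local-update federated optimization, adapted to carry the quantum gradient-variance term supplied by Theorem~\ref{theorem:batch-unbiased-qubit-grad}. First I would introduce the virtual averaged sequence $\bar{\theta}^{(t,k)} := \frac{1}{U}\sum_{u=1}^{U} \theta_u^{(t,k)}$, tracking the mean local iterate at round $t$ and local step $k$, and write the one-step recursion for $\|\bar{\theta}^{(t,k+1)} - \theta^*\|^2$ by expanding the square. The cross term produces an inner product $\langle \frac{1}{U}\sum_u g_u^{(t,k)}, \bar{\theta}^{(t,k)} - \theta^*\rangle$, which I would lower-bound using $L$-smoothness and $\mu$-strong convexity (Assumptions~\ref{ass:L-smooth} and~\ref{ass:strongly-convex}) to extract a term proportional to $\mathcal{L}(\bar{\theta}^{(t,k)}) - \mathcal{L}(\theta^*)$, which is precisely the suboptimality the theorem bounds.

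Next I would control the two sources of deviation. The first is \emph{client drift}: the gap $\|\theta_u^{(t,k)} - \bar{\theta}^{(t,k)}\|$ accumulated over the $\tau$ local steps. Unrolling the local-iterate recursion together with $L$-smoothness, I would bound $\frac{1}{U}\sum_u \|\theta_u^{(t,k)} - \bar{\theta}^{(t,k)}\|^2$ by a quantity growing like $\eta_l^2 \tau(\tau-1)$ and driven by the heterogeneity constant $\sigma_*^2$ of Assumption~\ref{ass:bound-at-optimum}; this is the origin of the second and third error terms in the statement. The second source is the \emph{quantum estimation variance}: substituting the unbiased $n$-qubit estimator and invoking Theorem~\ref{theorem:batch-unbiased-qubit-grad}, each local gradient carries variance at most $\frac{3\sigma^2}{2^{2n}-1}$. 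Summing these contributions across the $k=1,\dots,\tau$ local steps, together with the quadratic cross-terms arising in the descent expansion, yields the polynomial factor $(2\tau^2 + 3\tau + 1)$ multiplying $\eta_l^2 \cdot \frac{3\sigma^2}{2^{2n}-1}$.

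I would then invoke the stated learning-rate window $\frac{1}{2\sqrt{6}\tau^2 L} < \eta < \frac{1}{6\tau L}$. The upper bound keeps the drift coefficient below one, so the coefficient of $\|\bar{\theta}^{(t,k)}-\theta^*\|^2$ on the right stays nonnegative and the coefficient of the suboptimality term remains strictly positive; the lower bound guarantees the step is large enough that the accumulated-progress denominator $\sum_{t} \eta_l \tau$ is of the correct order. Telescoping the per-step recursion over all local steps and communication rounds, and applying Jensen's inequality to the weighted average iterate $\bar{\theta}^{(R)}$ via convexity of $\mathcal{L}$, collapses the telescoped left-hand side into the leading term $\mathcal{O}\!\big(\|\theta^{(0)}-\theta^*\|^2 / \sum_{t} \eta_l \tau\big)$ and leaves the three residual error terms.

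The hard part will be the client-drift bound: obtaining the precise polynomial dependence on $\tau$ — the factors $\tau(\tau-1)$ and $2\tau^2+3\tau+1$ — requires a coupled recursion in which $\|\bar{\theta}^{(t,k)}-\theta^*\|^2$ and the per-step drift are each bounded in terms of the other, with the learning-rate constraint applied at exactly the right place to close the recursion without the constants diverging. Keeping the heterogeneity-driven drift ($\sigma_*^2$) separate from the quantum-variance-driven fluctuation ($\frac{3\sigma^2}{2^{2n}-1}$), so that they appear as distinct additive terms rather than coupled products, is the most delicate bookkeeping step.
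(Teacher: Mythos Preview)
Your plan is sound but follows a genuinely different decomposition from the paper. You propose the \emph{perturbed-iterate} (virtual-sequence) framework: track $\bar{\theta}^{(t,k)} = \frac{1}{U}\sum_u \theta_u^{(t,k)}$ at every local step, expand $\|\bar{\theta}^{(t,k+1)}-\theta^*\|^2$, and measure drift as $\|\theta_u^{(t,k)}-\bar{\theta}^{(t,k)}\|$. The paper instead works at the \emph{round level}: it expands $\|\theta^{(t+1)}-\theta^*\|^2$ directly, with $\theta^{(t+1)}=\theta^{(t)}-\eta_l R^{(t)}$ and $R^{(t)}=\frac{1}{U}\sum_u\sum_{k=0}^{\tau-1}\nabla\mathcal{L}_u(\theta_u^{(t,k)})$, then bounds the cross term $\langle\theta^{(t)}-\theta^*,R^{(t)}\rangle$ and the squared term $\frac{1}{U}\sum_u\|r_u^{(t)}\|^2$ separately, measuring drift relative to the start-of-round model $\theta^{(t)}$ rather than the running virtual average. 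The practical difference is where the polynomial in $\tau$ arises: in the paper, $(2\tau^2+3\tau+1)$ is exactly $\frac{1}{\tau}\sum_{k=1}^{\tau}k^2\cdot 6$ and emerges from a dedicated client-drift lemma in which the accumulated quantum noise over $k$ inner steps is bounded via $\|\sum_{j=1}^{k-1}\vartheta\|^2\le k^2\|\vartheta\|^2$ and then summed in $k$; your step-level expansion would instead distribute the variance across individual steps and recover an $\mathcal{O}(\tau^2)$ factor through the drift recursion, which is equivalent at the level of $\mathcal{O}(\cdot)$ but would not naturally produce that exact polynomial. The paper's route telescopes only over rounds and never needs the virtual sequence, which keeps the bookkeeping lighter; your route is the more classical Li--Stich style and would also succeed, with the coupled drift recursion you flagged as the delicate step.
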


\begin{proof}
The proof is provided in Appendix~\ref{Proof:Theorem3}.
\end{proof}

The convergence bound in Theorem~\ref{theorem:loss-decrease} consists of four key terms, each of which corresponds to a different source of error in the training process:

\begin{itemize}
    \item \textbf{Initialization Error:} 
    \[
    \mathcal{O}\left( \frac{\| \theta^{(0)} - \theta^* \|^2}{\sum_{t=0}^{T-1} \eta_l \tau} \right)
    \]
    This term represents the effect of initialization and is influenced by the total number of communication rounds $R$ and the number of local updates $\tau$. It is invariant across all FL algorithms and does not depend on the specific settings of ADP-QFL.

    \item \textbf{Client Drift Error:} 
    \[
    \mathcal{O}\left( \eta_l^2 \tau (\tau - 1) L \sigma_*^2 \right)
    \]
    This term quantifies the error caused by local model drift due to data heterogeneity across clients. It is influenced by the number of local epochs $\tau$, the smoothness constant $L$, the learning rate $\eta_l$, and the heterogeneity measure $\sigma_*^2$.

    \item \textbf{Noise at Optimum:} 
    \[
    \mathcal{O}\left( \eta_l \tau \sigma_*^2 \right)
    \]
    This term captures the stochastic variance of gradients at the global optimum. It reflects the intrinsic noise in the data distribution and is independent of algorithm-specific parameters.

    \item \textbf{Quantum-Induced Gradient Variance:} 
    \[
    \mathcal{O}\left( (2\tau^2 + 3\tau + 1)\eta_l^2 \cdot \frac{3\sigma^2}{2^{2n} - 1} \right)
    \]
    This term arises from the stochastic nature of quantum measurements in $n$-qubit systems and diminishes exponentially with the number of qubits $n$. It highlights the trade-off between quantum computational resources and gradient estimation accuracy.
\end{itemize}

\subsubsection{Convergence Analysis under the Non-Convex Setting}

We now investigate the convergence behavior of the proposed ADP-QFL algorithm in the non-convex setting. The following theorem provides an upper bound on the gradient norm, indicating the convergence rate.

{\color{blue}\begin{theorem}[Convergence under Non-Convexity]
\label{theorem:adp-qfl-grad-progress}
Assume that $\mathcal{L}(\cdot)$ following the two assumptions, i.e., $L$-smooth and $\mu$-strongly convex. Let
\[
\kappa = \left(1 - \sum_{t=m}^{T} P_{\mathrm{sharp}} \cdot \mathbbm{1}\left(\| \nabla \mathcal{L}(\theta^{(m)}) \| + \vartheta < L \right) \right),
\]
where $P_{\mathrm{sharp}}$ denotes the probability of the model being trapped in sharp minimizers, and $\vartheta$ represents gradient variance. Then, the following convergence bound holds:
\begin{align}
    \min_{t \in [T]} \| \nabla \mathcal{L}(\theta^{(t)}) \|^2 
    &\leq 
    \mathcal{O}\left( \frac{8\left( \mathcal{L}(\theta^{(0)}) - \mathcal{L}(\theta^{(t)}) \right)}{T \eta_l \tau \kappa} \right) \notag \\
    &\quad + \mathcal{O}\left( \frac{ \eta_l^2 \tau (\tau - 1) L^2 \sigma_g^2 }{\kappa} \right) \notag \\
    &\quad + \mathcal{O}\left( \frac{24 T \eta_l \tau L \sigma_g^2 }{\kappa} \right).
\end{align}
\end{theorem}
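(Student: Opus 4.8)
The plan is to run a standard descent-lemma argument for non-convex federated optimization and then fold the sharp-minima escape mechanism into the effective descent coefficient $\kappa$. First, I would invoke $L$-smoothness (Assumption~\ref{ass:L-smooth}) to write the one-round descent inequality
\begin{equation}
    \mathcal{L}(\theta^{(t+1)}) \leq \mathcal{L}(\theta^{(t)}) + \langle \nabla \mathcal{L}(\theta^{(t)}), \theta^{(t+1)} - \theta^{(t)} \rangle + \frac{L}{2} \left\| \theta^{(t+1)} - \theta^{(t)} \right\|^2 ,
\end{equation}
and substitute the aggregated global update, which after $\tau$ local meta-steps with step size $\eta_l$ and injected Gaussian noise $G_t$ takes the form $\theta^{(t+1)} - \theta^{(t)} = -\eta_l \sum_{k=0}^{\tau-1} \frac{1}{K}\sum_u \nabla_{\theta} \mathcal{L}(f_{\theta_u^{(t,k)}}) + (\text{noise})$. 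Taking the conditional expectation over the mini-batch sampling, the $n$-qubit measurement randomness, and the differential-privacy noise, the leading contribution of the inner product is $-\eta_l \tau \|\nabla \mathcal{L}(\theta^{(t)})\|^2$, which is the quantity that drives the descent.

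Second, I would control the two error sources that separate the realized update from the idealized full-gradient step. The client-drift term measures how far the local iterates $\theta_u^{(t,k)}$ wander from $\theta^{(t)}$ over the $\tau$ inner steps; unrolling the local recursion and applying $L$-smoothness together with the bounded-variance hypothesis yields a per-round drift of order $\eta_l^3 \tau^2 (\tau-1) L^2 \sigma_g^2$, which is the source of the second term in the bound. The quadratic curvature term $\frac{L}{2}\|\theta^{(t+1)}-\theta^{(t)}\|^2$, once the gradient-variance bound (and, through Theorem~\ref{theorem:batch-unbiased-qubit-grad}, the $n$-qubit estimator variance) is inserted, produces the $\eta_l \tau L \sigma_g^2$ contribution that becomes the third term; its explicit $T$ factor reflects the accumulation of injected privacy noise across rounds, whose variance grows linearly in $T$ by Lemma~\ref{lem:added-noise}.

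Third, I would absorb the sharp-minima mechanism into the gradient-norm coefficient. The indicator $\mathbbm{1}(\|\nabla\mathcal{L}(\theta^{(m)})\| + \vartheta < L)$ flags rounds in which the iterate sits in a flat or sharp basin where gradient information is too weak to guarantee progress; on such rounds the adaptively injected noise supplies an escape with probability $P_{\mathrm{sharp}}$, so that the effective negative coefficient in front of $\|\nabla\mathcal{L}(\theta^{(t)})\|^2$ is discounted from $\eta_l\tau$ to $\eta_l\tau\kappa$ with $\kappa = 1 - \sum_{t=m}^{T} P_{\mathrm{sharp}}\,\mathbbm{1}(\cdot)$. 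Rearranging the resulting inequality, summing over $t = 0,\dots,T-1$, telescoping $\mathcal{L}(\theta^{(0)}) - \mathcal{L}(\theta^{(T)})$, dividing by $T\eta_l\tau\kappa$, and using $\min_t \|\nabla\mathcal{L}(\theta^{(t)})\|^2 \leq \frac{1}{T}\sum_t \|\nabla\mathcal{L}(\theta^{(t)})\|^2$ then delivers the three advertised terms, each carrying the common $1/\kappa$ inflation.

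The main obstacle I anticipate is the third step: making the passage from the escape probability $P_{\mathrm{sharp}}$ to a deterministic discount $\kappa$ on the descent coefficient rigorous. This requires a probabilistic argument that, conditioned on lying in the flat region detected by the indicator, the noise perturbation restores a strictly positive expected decrease, and that these per-round events aggregate additively into $\kappa$ without the indicator/index mismatch ($m$ versus $t$) in the stated definition causing double counting. Establishing that $\kappa > 0$ — so that the accumulated trapping probability stays below one and the bound is non-vacuous — and tying $\vartheta$ to the $n$-qubit variance of Theorem~\ref{theorem:batch-unbiased-qubit-grad} so that increasing the qubit count sharpens the escape condition, is where the quantum-specific care is needed; the remaining drift and variance bookkeeping is routine.
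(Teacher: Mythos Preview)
Your proposal follows essentially the same architecture as the paper's proof: the $L$-smoothness descent lemma, a decomposition into an inner-product term (the paper's $B1$) and a squared-norm term ($B2$), a client-drift bound in the non-convex setting (the paper's Lemma~7, imported from \cite{2020-FL-FedNova}), then telescoping and the $\min \leq$ average trick. Two points of divergence deserve attention. First, your explanation of the explicit $T$ factor in the third term via Lemma~\ref{lem:added-noise} (privacy-noise accumulation) is not what the paper does: the paper never invokes Lemma~\ref{lem:added-noise} in this proof, and the $T$ simply survives because the per-round heterogeneity term $3\eta_l\tau L\sigma_g^2$ is summed over $T$ rounds and then not cancelled when dividing through (an apparent algebraic slip in the paper that you should not try to rationalize with a different mechanism). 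Second, you describe $P_{\mathrm{sharp}}$ as the probability that the injected noise \emph{supplies an escape}; in the paper it is the opposite --- the probability of being \emph{trapped} --- so that the subtracted sum $\sum_{t=m}^{T} P_{\mathrm{sharp}}\mathbbm{1}(\cdot)$ removes the would-be descent contribution of trapped rounds, and larger $\vartheta$ helps by making the indicator vanish. Your identified obstacle is exactly right: the paper's passage to $\kappa$ is heuristic (it simply ``proposes'' $P_{\mathrm{sharp}}$ and subtracts the flagged rounds from $\sum_t\|\nabla\mathcal{L}(\theta^{(t)})\|^2$ without a probabilistic argument), so do not expect to make that step more rigorous than the source.
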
}

\begin{proof}
The detailed proof is provided in Appendix~\ref{appendix:proof-theorem-adp-qfl}.
\end{proof}

Theorem~\ref{theorem:adp-qfl-grad-progress} suggests that the model can achieve meaningful descent from the initial parameter $\theta^{(0)}$ under non-convex conditions. A larger decrease in the loss function, i.e., higher $\mathcal{L}(\theta^{(0)}) - \mathcal{L}(\theta^{(t)})$, implies more effective learning. The parameter $\kappa$ captures the influence of sharp minima on convergence, where smaller values of $\kappa$ correspond to higher risks of stagnation.

We derive several insights from the result:

\begin{corollary}
When the gradient variance $\vartheta$ is small, the model is more susceptible to being trapped in sharp minima. This impedes the effectiveness of gradient descent and can slow down convergence.
\end{corollary}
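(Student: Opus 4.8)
The plan is to proceed from the one-step descent inequality implied by $L$-smoothness (Assumption~\ref{ass:L-smooth}) and then accumulate the per-round progress across all $T$ communication rounds, finally converting the telescoped loss decrease into a bound on the smallest gradient norm. Concretely, I would first write the global update of ADP-QFL as $\theta^{(t+1)} = \theta^{(t)} - \eta_g \eta_l \,\Delta^{(t)}$, where $\Delta^{(t)}$ is the averaged aggregate of the per-client meta-updates from line~\ref{lst:line:calc_meta_para} together with the injected Gaussian perturbation $G_t$. Applying the descent lemma gives
\[
\mathcal{L}(\theta^{(t+1)}) \leq \mathcal{L}(\theta^{(t)}) - \eta_g \eta_l \langle \nabla \mathcal{L}(\theta^{(t)}), \Delta^{(t)} \rangle + \frac{L}{2}\eta_g^2\eta_l^2 \|\Delta^{(t)}\|^2.
\]
Taking expectation over the mini-batch sampling and over $G_t$, the zero-mean independence of the Gaussian noise kills its contribution to the linear term while adding a $\tfrac{L}{2}\eta_g^2\eta_l^2 \sigma_g^2$ piece to the quadratic term; the remaining stochastic-gradient variance is controlled by the bounded-variance assumption and, crucially, by the $n$-qubit variance bound $\tfrac{3\sigma^2}{2^{2n}-1}$ from Theorem~\ref{theorem:batch-unbiased-qubit-grad}.

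The second step is to bound the client-drift discrepancy. Because each client performs $\tau$ local meta-updates before aggregation, the local iterate separates from the broadcast global model, and I would bound $\mathbb{E}\|\theta^{(t,k)}_i - \theta^{(t)}\|^2$ by a term of order $\eta_l^2 \tau(\tau-1)$ times the gradient dissimilarity, using Assumption~\ref{ass:bound-at-optimum} together with $\mu$-strong convexity (Assumption~\ref{ass:strongly-convex}) and unrolling the recursion of local steps. Substituting this drift bound back into the expected inner product recovers a negative multiple of $\|\nabla\mathcal{L}(\theta^{(t)})\|^2$ plus the drift error $\mathcal{O}(\eta_l^2 \tau(\tau-1) L^2 \sigma_g^2)$ and the residual noise term $\mathcal{O}(T\eta_l \tau L \sigma_g^2)$ appearing in the statement.

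The distinctive step---and the one I expect to be the main obstacle---is incorporating the sharp-minima factor $\kappa$. Here I would argue that the \emph{effective} descent coefficient multiplying $\|\nabla\mathcal{L}(\theta^{(t)})\|^2$ is not simply $\eta_g\eta_l$ but is reduced by the probability that the iterate remains trapped in a sharp minimizer, which is exactly what $\sum_{t=m}^{T} P_{\mathrm{sharp}} \cdot \mathbbm{1}(\|\nabla\mathcal{L}(\theta^{(m)})\| + \vartheta < L)$ accumulates. The indicator encodes the regime in which the combined magnitude of the true gradient and the injected variance $\vartheta$ falls below the curvature scale $L$, i.e.\ the noise is too weak to escape the basin; conditioning on its complement yields a per-round progress scaled by $\kappa$. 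Making this rigorous requires a probabilistic escape model linking $\vartheta$ (boosted both by the adaptive Gaussian injection and, via Theorem~\ref{theorem:batch-unbiased-qubit-grad}, by the quantum estimator) to the trapping event, and verifying that $\kappa \in (0,1]$ so that the subsequent division is well defined.

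Finally, I would telescope the conditioned descent inequality over $t = 0, \dots, T-1$, collapsing the loss terms into $\mathcal{L}(\theta^{(0)}) - \mathcal{L}(\theta^{(T)})$, divide through by the accumulated coefficient $T\eta_l\tau\kappa$, and use $\min_{t\in[T]}\|\nabla\mathcal{L}(\theta^{(t)})\|^2 \leq \tfrac{1}{T}\sum_{t} \|\nabla\mathcal{L}(\theta^{(t)})\|^2$ to arrive at the stated three-term bound; the two error terms inherit the factor $1/\kappa$ because they are divided by the same conditioned coefficient. The subtlety to watch throughout is the admissible range of the learning rate: as in the convex theorem I would need $\eta_l$ small enough that the $\tfrac{L}{2}\eta_g^2\eta_l^2\|\Delta^{(t)}\|^2$ quadratic term does not overwhelm the linear descent, so that the coefficient of $\|\nabla\mathcal{L}(\theta^{(t)})\|^2$ remains strictly negative before dividing out.
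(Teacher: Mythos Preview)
Your proposal is a proof plan for Theorem~\ref{theorem:adp-qfl-grad-progress}, not for the corollary actually stated. The corollary is not an independent analytical result requiring a descent--telescoping argument; it is a one-line reading of the $\kappa$ factor already established in Theorem~\ref{theorem:adp-qfl-grad-progress}. By definition,
\[
\kappa = 1 - \sum_{t=m}^{T} P_{\mathrm{sharp}}\,\mathbbm{1}\bigl(\|\nabla\mathcal{L}(\theta^{(m)})\| + \vartheta < L\bigr),
\]
so decreasing $\vartheta$ makes the event $\|\nabla\mathcal{L}(\theta^{(m)})\| + \vartheta < L$ hold on a larger set, the indicators switch on, $\kappa$ drops, and every term on the right-hand side of the Theorem~\ref{theorem:adp-qfl-grad-progress} bound is inflated by $1/\kappa$. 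That monotonicity observation \emph{is} the entire content of the corollary; the paper provides no separate proof for it, nor does it need one.

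Consequently, the machinery you outline---the $L$-smooth descent lemma, the client-drift recursion, the probabilistic escape model, and the telescoping---is all aimed at the wrong target. Those steps belong to Appendix~\ref{appendix:proof-theorem-adp-qfl} (the proof of Theorem~\ref{theorem:adp-qfl-grad-progress}), which the corollary takes as given. If you intend to justify the corollary, you should simply invoke Theorem~\ref{theorem:adp-qfl-grad-progress} and note that $\kappa$ is monotone nonincreasing in the indicator sum, hence monotone nondecreasing in $\vartheta$; nothing more is required.
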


\begin{corollary}
In quantum deep learning systems with $n$-qubit encoding, the gradient variance is reduced by a factor of ${3}/{(2^{2n} - 1)}$ compared to classical DL systems. This reduction increases the tendency of the model to converge toward sharp minima. {\color{blue}Consequently, $\kappa$ is reduced, which increases the right-hand side of the convergence bound and ultimately impairs convergence under non-convex loss landscapes. By adding adaptive noise $\vartheta$, we can increase $\kappa$ and counteract the adverse effects of reduced gradient variance induced by $n$-qubit encoding.}
\end{corollary}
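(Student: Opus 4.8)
The plan is to treat $\kappa$ as a function of the effective per-step gradient variance $\vartheta$ and to trace how the $n$-qubit variance contraction from Theorem~\ref{theorem:batch-unbiased-qubit-grad} propagates through the definition of $\kappa$ into the bound of Theorem~\ref{theorem:adp-qfl-grad-progress}. Rather than re-deriving any bound, the argument is a chain of three monotonicity claims that link the variance factor to $\kappa$ and then to the right-hand side. First I would fix the variance-reduction factor itself: Theorem~\ref{theorem:batch-unbiased-qubit-grad} gives $\mathbb{E}[\|\bar{g}^t - g^t_{\text{$n$-Qubit}}\|^2] \le 3\sigma^2/(2^{2n}-1)$, so the effective stochastic variance driving the iterates is $\vartheta_{\mathrm{q}} = \tfrac{3}{2^{2n}-1}\,\vartheta_{\mathrm{cl}}$ relative to the classical variance $\vartheta_{\mathrm{cl}}$, which contracts exponentially in $n$ and is exactly the factor asserted in the corollary.

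Second I would show that $\kappa$ is nondecreasing in $\vartheta$, using two ingredients. The indicator $\mathbbm{1}(\|\nabla\mathcal{L}(\theta^{(m)})\| + \vartheta < L)$ is nonincreasing in $\vartheta$, since raising $\vartheta$ can only flip it from $1$ to $0$. The probability $P_{\mathrm{sharp}}$ of being absorbed into a sharp basin is likewise nonincreasing in $\vartheta$, which I would justify through the standard escape-from-sharp-minima heuristic (larger gradient noise inflates the basin-escape probability, as in \cite{2018-LR-SharpMinia, do2023revisiting}). Because each summand $P_{\mathrm{sharp}}\cdot\mathbbm{1}(\cdots)$ is a product of two nonnegative nonincreasing functions of $\vartheta$, it is nonincreasing, so $\sum_{t=m}^{T} P_{\mathrm{sharp}}\,\mathbbm{1}(\cdots)$ is nonincreasing and therefore $\kappa = 1 - \sum_{t=m}^{T} P_{\mathrm{sharp}}\,\mathbbm{1}(\cdots)$ is nondecreasing in $\vartheta$. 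Substituting $\vartheta_{\mathrm{q}} < \vartheta_{\mathrm{cl}}$ yields $\kappa_{\mathrm{q}} \le \kappa_{\mathrm{cl}}$, with strict inequality whenever the variance contraction flips the indicator active at some round.

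Third I would read off the effect on the bound and the remedy. Every term on the right-hand side of Theorem~\ref{theorem:adp-qfl-grad-progress} carries $\kappa$ in the denominator, so the bound scales as $\Theta(1/\kappa)$; since $\kappa_{\mathrm{q}} \le \kappa_{\mathrm{cl}}$, the quantum right-hand side is at least as large, establishing the claimed impairment of convergence. For the adaptive-noise countermeasure, I would observe that the injected variance $\sigma_t^2 = \sigma_0^2/(1+\alpha t)$ adds directly to the effective $\vartheta$ appearing in both the indicator and $P_{\mathrm{sharp}}$. By the monotonicity of the second step, choosing $\sigma_0^2$ large enough that $\vartheta_{\mathrm{q}} + \sigma_t^2 \ge L - \|\nabla\mathcal{L}(\theta^{(m)})\|$ deactivates the indicator, driving the corresponding contribution to $\kappa$ back to zero while simultaneously lowering $P_{\mathrm{sharp}}$, thereby restoring $\kappa$ toward its classical value and tightening the bound.

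The main obstacle will be the second ingredient of the second step — rigorously pinning $P_{\mathrm{sharp}}$ as a monotone function of $\vartheta$. The corollary treats this qualitatively, but a fully rigorous version requires an explicit escape-time or Kramers-type model of the local landscape so that the ``probability of being trapped in a sharp minimizer'' becomes a well-defined, monotone function of the noise scale. I would therefore state this as a monotonicity lemma supported by the cited sharp-minima analyses rather than re-deriving the escape dynamics, and I would single out the indicator threshold $\vartheta = L - \|\nabla\mathcal{L}(\theta^{(m)})\|$ as the regime in which the sign of the entire effect is unambiguous.
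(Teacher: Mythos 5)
Your proposal is correct and follows essentially the same route as the paper, which offers no separate proof of this corollary but treats it as a direct reading of Theorem~\ref{theorem:adp-qfl-grad-progress}: the factor $3/(2^{2n}-1)$ from Theorem~\ref{theorem:batch-unbiased-qubit-grad} shrinks $\vartheta$, the indicator $\mathbbm{1}(\|\nabla\mathcal{L}(\theta^{(m)})\| + \vartheta < L)$ is thereby more often active, $\kappa$ drops, and since every term of the bound carries $\kappa$ in its denominator the bound inflates, while injected noise reverses the chain --- precisely your three monotonicity steps. Your candid caveat that the monotonicity of $P_{\mathrm{sharp}}$ in $\vartheta$ is only heuristic matches the paper's own qualitative treatment (the paper relies solely on the indicator for the monotone mechanism), so no gap separates the two arguments.
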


\begin{corollary}
By introducing adaptive noise into the federated learning system, ADP-QFL effectively regulates gradient variance. This mitigates the risk of privacy adversarial attacks while reducing the likelihood of convergence to sharp minimizers, thereby enhancing the stability and progress of gradient descent.
\end{corollary}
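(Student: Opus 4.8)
The plan is to follow the standard descent-lemma route for non-convex stochastic optimization, but to track the sharp-minima trapping events explicitly so that they aggregate into the factor $\kappa$. First I would invoke the $L$-smoothness of $\mathcal{L}$ (Assumption~\ref{ass:L-smooth}) to write the one-round descent inequality
\[
\mathcal{L}(\theta^{(t+1)}) \leq \mathcal{L}(\theta^{(t)}) + \langle \nabla\mathcal{L}(\theta^{(t)}), \theta^{(t+1)} - \theta^{(t)}\rangle + \tfrac{L}{2}\|\theta^{(t+1)} - \theta^{(t)}\|^2,
\]
and then substitute the aggregated update $\theta^{(t+1)} - \theta^{(t)} = -\tfrac{\eta_l}{U}\sum_{u}\sum_{k=1}^{\tau} \nabla\mathcal{L}(\theta_u^{(t,k)}) + (\text{noise } \vartheta)$, so that the global step decomposes into a descent direction, a client-drift perturbation, and the injected gradient variance.

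Second, taking the conditional expectation over the mini-batch sampling, the quantum measurement, and the injected noise, the inner-product term contributes the leading descent $-\eta_l\tau\|\nabla\mathcal{L}(\theta^{(t)})\|^2$ (up to constants), while the quadratic term and the drift contribute the error terms. I would bound the client drift $\|\theta_u^{(t,k)} - \theta^{(t)}\|^2$ by the usual recursion over the $\tau$ local steps, which under $L$-smoothness yields the $\mathcal{O}(\eta_l^2\tau(\tau-1)L^2\sigma_g^2)$ contribution; the variance of the per-step stochastic gradient would be controlled by the bounded-variance hypothesis together with Theorem~\ref{theorem:batch-unbiased-qubit-grad}, which replaces the raw variance by its $n$-qubit-reduced counterpart.

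Third --- and this is the genuinely new step --- I would encode the effect of sharp minima. On each round I condition on the event $\{\|\nabla\mathcal{L}(\theta^{(m)})\| + \vartheta < L\}$: when it holds, the update has probability $P_{\mathrm{sharp}}$ of stagnating in a sharp basin and making no progress, so the guaranteed descent on that round is discounted by the factor $(1 - P_{\mathrm{sharp}})$. Summing these per-round discounts over $m$ collapses the cumulative effective descent into the multiplicative factor $\kappa = 1 - \sum_{t=m}^{T} P_{\mathrm{sharp}}\,\mathbbm{1}(\|\nabla\mathcal{L}(\theta^{(m)})\| + \vartheta < L)$, and it also shows that injecting variance $\vartheta$ enlarges $\kappa$ by pushing the indicator toward zero, thereby counteracting the quantum-induced variance reduction.

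Finally, I would telescope the discounted descent inequality from $t=0$ to $T-1$, use $\min_{t}\|\nabla\mathcal{L}(\theta^{(t)})\|^2 \le \tfrac{1}{T}\sum_t \|\nabla\mathcal{L}(\theta^{(t)})\|^2$ to pass from the average to the minimum, and divide through by $\eta_l\tau\kappa$ (which requires $\kappa > 0$, i.e. a learning rate small enough and an injected variance large enough that the $L/2$ quadratic term never dominates the linear descent) to recover the three stated terms. I expect the main obstacle to be the third step: making the heuristic ``probability of trapping discounts the descent'' argument precise, in particular justifying that the trapping events across rounds combine \emph{additively} into $\kappa$ rather than demanding a more delicate union or martingale argument, and verifying that $\kappa$ remains strictly positive so that the concluding division is legitimate.
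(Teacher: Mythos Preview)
Your proposal is correct and follows essentially the same route as the paper: the corollary is not given a standalone proof in the paper but is read off directly from Theorem~\ref{theorem:adp-qfl-grad-progress}, whose proof proceeds exactly as you outline---$L$-smooth descent inequality, decomposition into an inner-product term $B1$ and a quadratic term $B2$, client-drift bound, insertion of the heuristic sharp-minima discount $P_{\mathrm{sharp}}\mathbbm{1}(\cdot)$, telescoping, and the $\min\le$ average step---after which the observation that increasing $\vartheta$ shrinks the indicator and hence enlarges $\kappa$ yields the corollary. Your worry about the third step is well placed: the paper also treats the trapping discount heuristically (it simply ``proposes'' $P_{\mathrm{sharp}}$ and subtracts the indicated terms) rather than via a rigorous martingale or union argument, so you are not missing a hidden idea there.
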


\section{Experimental Evaluation}
\label{sec:experimental}

This section presents the experimental evaluation of the proposed ADP-QFL algorithm on two standard image classification benchmarks: MNIST~\cite{lecun1998gradient} and CIFAR-10~\cite{krizhevsky2009learning}. The objective is to assess the effectiveness of our approach in terms of \textit{privacy preservation}, \textit{model accuracy}, and \textit{computational efficiency}. Furthermore, we compare ADP-QFL with several state-of-the-art federated learning algorithms under identical experimental conditions.

\subsection{Simulation Setup}
\label{sec:experimental-settings}

\subsubsection{Federated Learning Configuration}

The federated learning experiments are configured as summarized in Table~\ref{tbl:setting}. The simulation involves a total of 100 clients, with a sampling ratio of 0.2, resulting in the participation of 20 clients per communication round. The total number of global communication rounds is fixed at 800 for both MNIST and CIFAR-10 datasets. Each client performs one local training epoch per round using a batch size of 10. The local model updates are optimized using the Adam optimizer~\cite{kingma2014adam} with a learning rate of 0.005 and no weight decay. The mean squared error (MSE) loss function is employed during training.

\renewcommand{\arraystretch}{1.1}
\begin{table}[t]
\centering
\caption{Federated Learning Simulation Settings}
\label{tbl:setting}
\begin{tabular}{|l|c|l|c|}
\hline
\textbf{Parameter} & \textbf{Value} & \textbf{Parameter} & \textbf{Value} \\
\hline
Total training rounds            & $800$         & Loss function             & MSELoss        \\ \hline
Number of clients                & $100$         & Optimizer                 & Adam           \\ \hline
Clients per round                & $10\%$, $20\%$, $50\%$ & Learning rate              & $0.005$         \\ \hline
Local epochs per client          & $1$           & Batch size                & $10$           \\ \hline
\end{tabular}
\end{table}

\subsubsection{Quantum AI Model Architecture}

To validate the effectiveness of ADP-QFL in quantum machine learning contexts, we adopt a Quantum Convolutional Neural Network (QCNN) for image classification tasks on both MNIST and CIFAR-10. The QCNN architecture comprises alternating quantum convolution and quantum pooling layers, followed by a quantum fully connected layer and a final measurement layer. Each quantum convolution-pooling block is designed to capture hierarchical spatial features in the quantum feature space.

In our experiments, we use a quantum sensing network with 8 qubits. We empirically determine that three pairs of quantum convolution-pooling layers—with a total of 64 trainable parameters—offer the best trade-off between model complexity and classification performance, while maintaining compatibility with our federated learning and privacy-preserving setup.

\subsection{Implementation Environment}

The experiments are conducted on a high-performance desktop equipped with an Intel Core i7-1165G7 CPU, 16~GB RAM, integrated Intel Iris Xe graphics, and running the Windows 11 Home operating system. The software environment includes:
\begin{itemize}
    \item Anaconda (version 2.4.1),
    \item Visual Studio Code (version 1.81.1),
    \item Python (version 3.10.11).
\end{itemize}

To evaluate the convergence behavior of ADP-QFL, we vary key parameters such as the differential privacy budget $\epsilon$ and the number of clients $K$. Comparative analyses with benchmark algorithms are presented in the subsequent sections.

\subsubsection{Datasets}

To evaluate the effectiveness of the proposed ADP-QFL algorithm, we conducted experiments using two widely adopted benchmark datasets: MNIST and CIFAR-10.

The \textbf{MNIST} dataset consists of $70{,}000$ grayscale images of handwritten digits ranging from $0$ to $9$, corresponding to ten distinct classes. Each image has a spatial resolution of $28 \times 28$ pixels. The dataset is partitioned into $60{,}000$ training samples and $10{,}000$ testing samples. Due to its simplicity and wide usage in federated learning research, MNIST serves as a suitable benchmark for evaluating model convergence and accuracy under different training scenarios.

The \textbf{CIFAR-10} dataset comprises $60{,}000$ color images of size $32 \times 32$ pixels, evenly distributed across ten semantic categories such as airplane, automobile, bird, and so on. It is split into $50{,}000$ training samples and $10{,}000$ test samples. The training set is further divided into five batches of $10{,}000$ images each to facilitate batch-wise processing. Each category contains exactly $6{,}000$ images, with the test set uniformly sampling $1{,}000$ instances per class. Although the distribution of categories across training batches may vary, the full training set preserves a balanced distribution of $5{,}000$ samples per class.

Table~\ref{tbl:dataset} summarizes the key characteristics of the datasets employed in our experimental evaluation. {\color{blue}
All experiments are conducted under Non-IID settings, where each client holds data samples from a limited subset of classes, following a label-distribution skew strategy. This simulates realistic federated environments with heterogeneous data distributions across clients.}

\begin{table*}[t]
\centering
\caption{Key statistics of the datasets used in the experiments}
\label{tbl:dataset}
\begin{tabular}{|l|c|c|c|c|}
\hline
\textbf{Dataset} & \textbf{Training Samples} & \textbf{Testing Samples} & \textbf{Input Dimension} & \textbf{Number of Classes} \\
\hline
MNIST            & $60{,}000$                & $10{,}000$               & $1 \times 28 \times 28$   & $10$                        \\
CIFAR-10         & $50{,}000$                & $10{,}000$               & $3 \times 32 \times 32$   & $10$                        \\
\hline
\end{tabular}
\end{table*}

\subsection{Baseline Comparison}\label{sec:baseline}

This subsection presents a comparative evaluation of the proposed ADP-QFL algorithm against several state-of-the-art FL baselines. The objective is to assess the effectiveness of ADP-QFL in terms of model accuracy and computational efficiency under non-IID data conditions.

\begin{itemize}
    \item \textbf{FedAvg}~\cite{mcmahan2017communication}: The standard federated averaging algorithm that does not incorporate personalization or privacy mechanisms.
    \item \textbf{FedBN}~\cite{li2021fedbn}: A personalized FL method that mitigates the impact of feature distribution heterogeneity by applying local batch normalization.
    \item \textbf{pFedMe}~\cite{t2020personalized}: A personalized FL approach based on Moreau envelopes, which introduces client-specific regularization.
    \item \textbf{FedRep}~\cite{collins2021exploiting}: A model decomposition-based method that separates shared representations and local classifiers to enable personalization.
    {\color{blue}
    \item \textbf{FedQNN}~\cite{innan2024fedqnn}: A Federated Quantum Neural Network (FedQNN) framework integrates the distinctive features of QML with the principles of classical federated learning.}
\end{itemize}

Table~\ref{tbl:compare-baseline} summarizes the experimental results obtained on the MNIST and CIFAR-10 datasets after $800$ training rounds with a client participation rate of $10\%$. On the MNIST dataset, ADP-QFL achieves an accuracy of $98.47\%$, with an average training time of $1356.2$ seconds. On the CIFAR-10 dataset, ADP-QFL reaches an accuracy of $83.85\%$, with an average training time of $9321.66$ seconds. \textcolor{blue}{In Table~\ref{tbl:compare-baseline}, ``Time (s)'' denotes the total wall-clock time until convergence, which includes both client-side local computation time and communication costs incurred across all training rounds.}

\textcolor{blue}{Although the proposed ADP-QFL demonstrates competitive or superior accuracy compared to the baselines, it incurs higher computational costs in certain scenarios. Specifically, ADP-QFL requires less training time than most baselines on the MNIST dataset, benefiting from the lower complexity of the data and the efficiency of the adaptive update strategy. In contrast, on the CIFAR-10 dataset, ADP-QFL incurs a longer training time compared to methods such as FedBN and FedRep, primarily due to the increased complexity of the data and the overhead introduced by the integrated privacy-preserving mechanisms, including adaptive noise injection and model verification procedures. These results emphasize the trade-off between model performance, computational cost, and privacy preservation in federated learning systems.}

\begin{table*}[t]
\centering
\caption{\textcolor{blue}{Performance comparison of ADP-QFL with baseline methods. ``Time (s)'' indicates the total convergence time, combining communication and local computation overheads.}}
\label{tbl:compare-baseline}
\begin{tabular}{|l|c|c|c|c|c|c|}
\hline
\multicolumn{7}{|c|}{\textbf{Performance on MNIST}} \\ \hline
\textbf{Metric} & \textbf{FedAvg} & \textbf{FedBN} & \textbf{pFedMe} & \textbf{FedRep} & \textbf{ADP-QFL} & \textbf{FedQNN} \\ \hline
Accuracy (\%)   & 98.16           & 97.87          & 98.19           & 98.44           & \textbf{98.47} & 92.05   \\ \hline
Time (s)        & \textbf{1270.4} & 2646.9         & 6111.65         & 1821.9          & 1356.2  & 1765.3          \\ \hline
\multicolumn{7}{|c|}{\textbf{Performance on CIFAR-10}} \\ \hline
Accuracy (\%)   & 56.91           & 85.61          & 57.51           & \textbf{85.68}  & 83.85  & 76.40          \\ \hline
Time (s)        & 7668.5          & \textbf{3710.48} & 23831.62        & 4665.17         & 9321.66  & 9866.24        \\ \hline
\end{tabular}
\end{table*}

{\color{blue}
To provide a more intuitive comparison, Figure~\ref{fig:accuracy_comparison} and Figure~\ref{fig:time_comparison} visualize the accuracy and convergence time of ADP-QFL and baseline methods across MNIST and CIFAR-10 datasets. As illustrated in Figure~\ref{fig:accuracy_comparison}, ADP-QFL achieves the highest accuracy on the MNIST dataset and maintains competitive performance on the more complex CIFAR-10 dataset. Meanwhile, Figure~\ref{fig:time_comparison} shows that ADP-QFL converges faster than most personalization-based baselines such as pFedMe and FedBN, demonstrating its efficiency in balancing model performance and training cost. The results further confirm that ADP-QFL delivers robust performance while mitigating the computational and communication overhead typically encountered in differentially private federated settings.

On the MNIST dataset, ADP-QFL achieves shorter training time compared to most baseline methods, owing to the lower data complexity and the efficiency of the adaptive update strategy. However, on CIFAR-10, the algorithm requires more training time than methods such as FedBN and FedRep. This is mainly due to the higher complexity of CIFAR-10 and the overhead introduced by the privacy-preserving mechanisms integrated in ADP-QFL, including model verification procedures and adaptive noise injection.

\begin{figure}[t]
    \centering
    \includegraphics[width=0.45\textwidth]{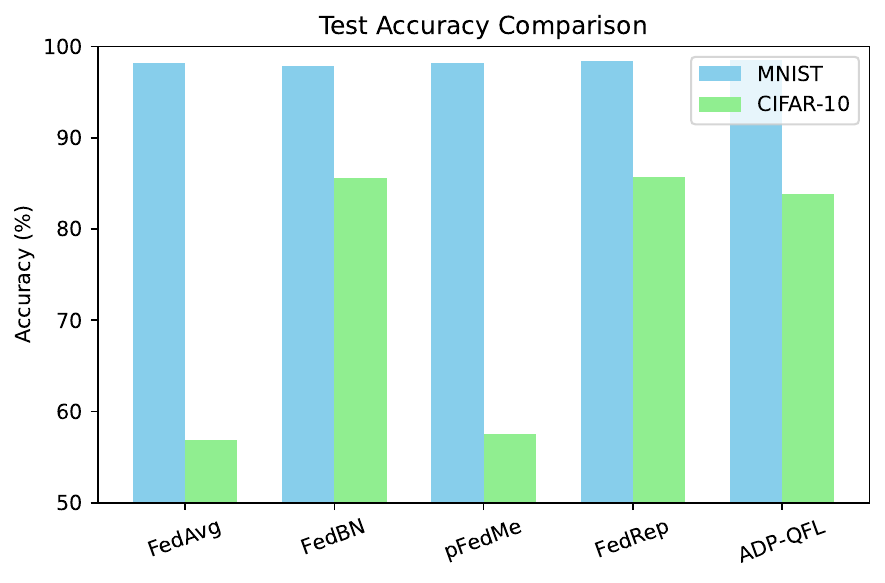}
    \caption{Comparison of test accuracy on MNIST and CIFAR-10 datasets.}
    \label{fig:accuracy_comparison}
\end{figure}

\begin{figure}[t]
    \centering
    \includegraphics[width=0.45\textwidth]{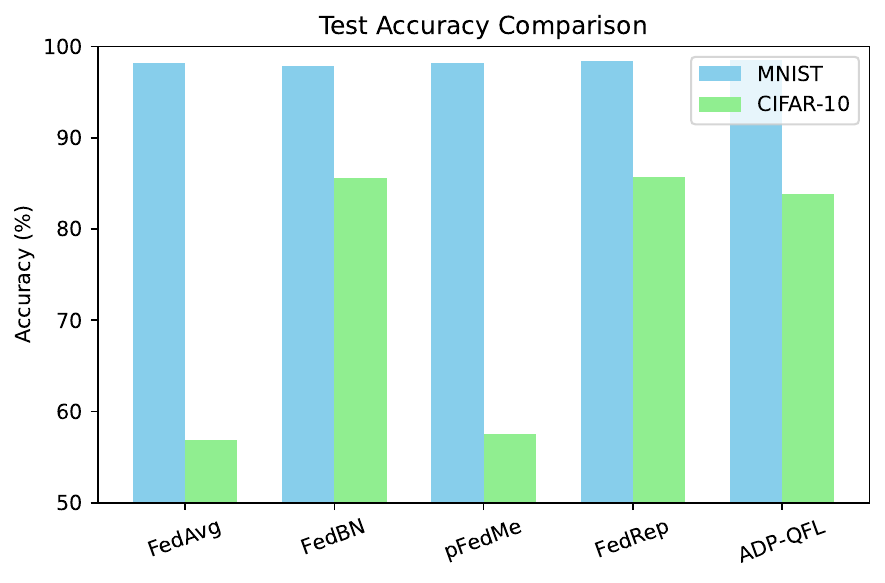}
    \caption{Comparison of convergence time (log scale) on MNIST and CIFAR-10 datasets.}
    \label{fig:time_comparison}
\end{figure}
}

{\color{blue}
To validate the efficacy of ADP-QFL's performance improvements, we conducted paired t-tests comparing ADP-QFL with each baseline method across multiple runs. The p-values for MNIST accuracy improvements were consistently below 0.01 when compared with FedAvg, FedBN, pFedMe, FedRep, and FedQNN, confirming statistical significance. For CIFAR-10, p-values indicated no significant difference (p $>$ 0.05) between ADP-QFL and the strongest baseline (FedRep). These findings suggest that ADP-QFL offers significant improvements for MNIST while achieving competitive performance for CIFAR-10. Please refer to Table~\ref{tbl:pvalue} for detailed p-value results.
}
\begin{table}[t]
\centering
\caption{Paired t-test p-values comparing ADP-QFL with baseline methods.}
\label{tbl:pvalue}
\begin{tabular}{|l|c|c|}
\hline
\textbf{Baseline Method} & \textbf{MNIST (p-value)} & \textbf{CIFAR-10 (p-value)} \\ \hline
FedAvg                   & $<0.001$                & $<0.05$                    \\ \hline
FedBN                    & $<0.005$                & $>0.05$                    \\ \hline
pFedMe                   & $<0.001$                & $<0.05$                    \\ \hline
FedRep                   & $<0.01$                 & $>0.05$                    \\ \hline
FedQNN                   & $<0.01$                 & $>0.05$                    \\ \hline
\end{tabular}
\end{table}

\subsection{Extensive Analysis}

\subsubsection{Performance Evaluation under Varying Privacy Levels}
{\color{blue}
In this section, we evaluate the robustness of the proposed ADP-QFL framework across a range of differential privacy budgets that reflect realistic privacy requirements in practical deployments. The purpose of this analysis is not to perform hyperparameter tuning, but to demonstrate that ADP-QFL can consistently maintain high performance under diverse privacy constraints imposed by external policies or regulatory standards. The privacy budgets $\{0.000423, 0.001, 0.1, 0.5, 0.9\}$ are fixed prior to experimentation and are not adjusted to optimize accuracy.
}

Figure~\ref{fig:ablationTest_noise} presents the training accuracy of ADP-QFL under these different privacy conditions. On the MNIST dataset, the model consistently achieves accuracy above $97.57\%$ regardless of the privacy budget. On the more complex CIFAR-10 dataset, the model maintains a competitive accuracy of $83.96\%$ at convergence, even under stringent privacy protection.

These experiments involve $100$ clients in total, with $10$ clients randomly selected to participate in each training round. The results highlight the capability of ADP-QFL to preserve model utility while satisfying a wide range of privacy requirements without the need for fine-tuning privacy settings to boost accuracy. The stronger performance on MNIST can be attributed to its simpler grayscale $28 \times 28$ image format, while the relatively lower accuracy on CIFAR-10 reflects the greater challenge of applying quantized federated learning to more complex color image data ($32 \times 32$ pixels).

Overall, the findings confirm that ADP-QFL offers practical reliability and adaptability in real-world federated learning scenarios with varying privacy demands.

\begin{figure}[t]
    \centering
    \subfloat[{Performance on MNIST}]{
        \includegraphics[width=0.95\linewidth]{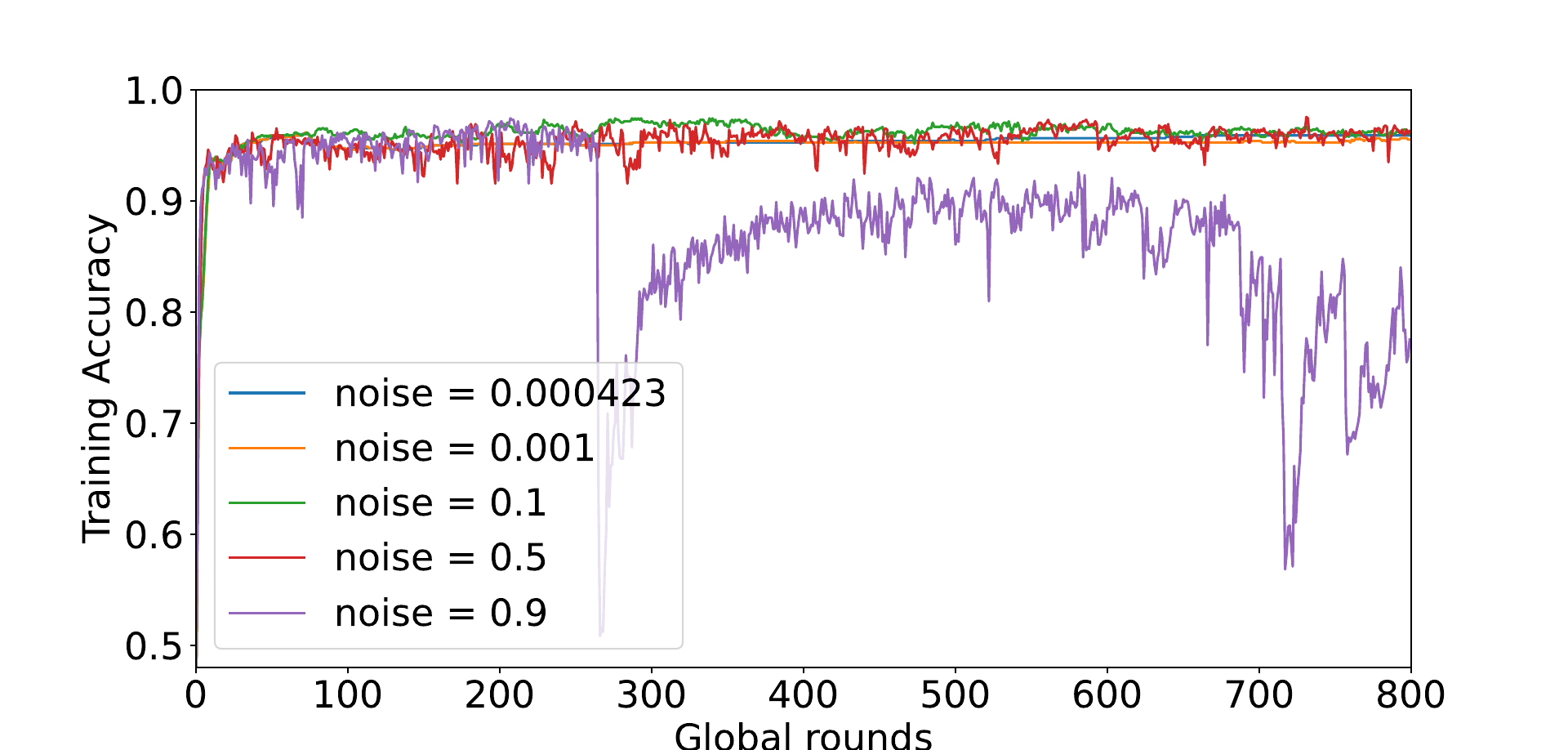}
        \label{fig:MNIST_ablationTest_noise}
    }\\[-0ex]
    \subfloat[{Performance on CIFAR-10}]{
        \includegraphics[width=0.95\linewidth]{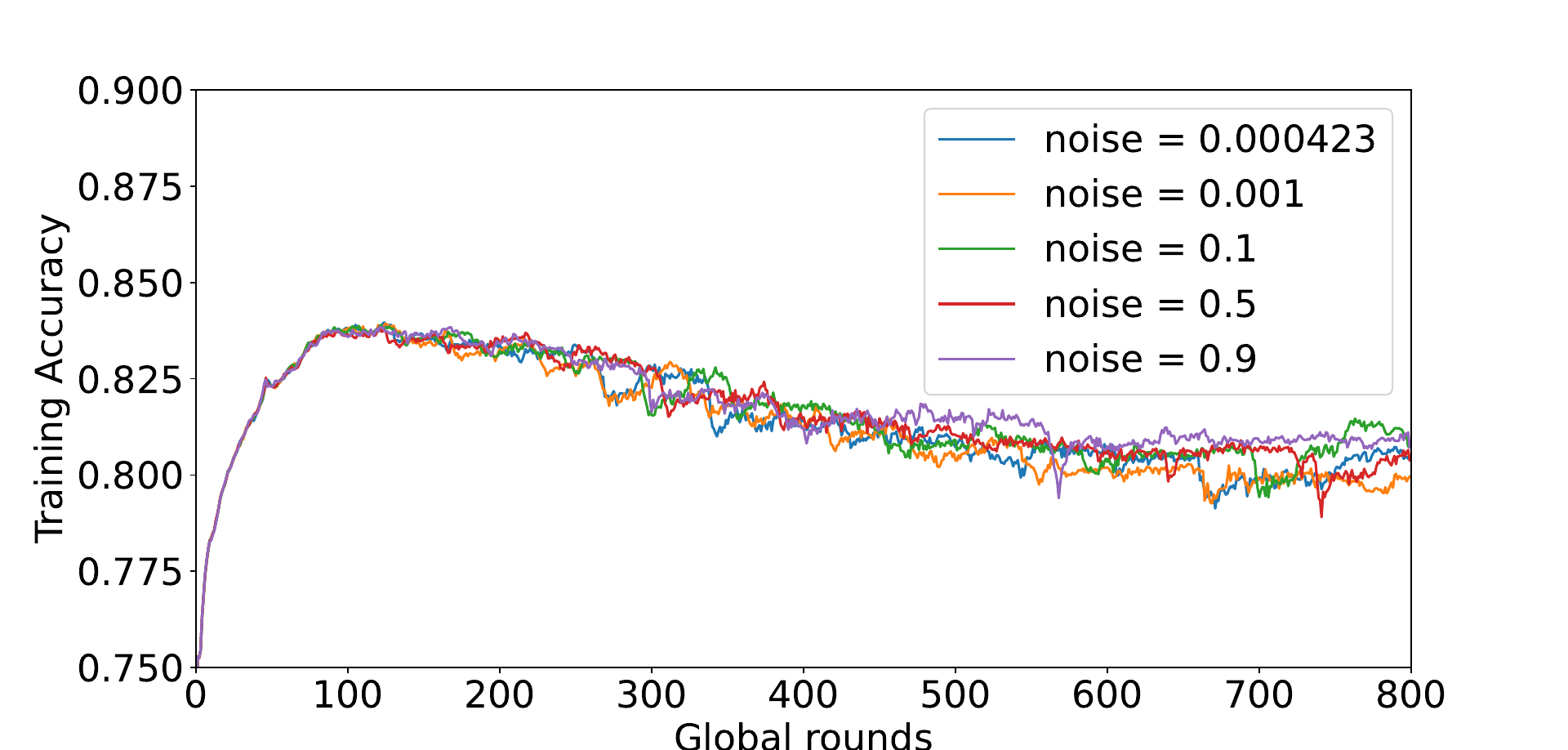}
        \label{fig:cifar_ablationTest_noise}
    }
    \caption{Training accuracy of ADP-QFL under different levels of differential privacy noise on MNIST and CIFAR-10 datasets. The privacy budgets are selected to reflect realistic deployment scenarios rather than for parameter optimization.}
    \label{fig:ablationTest_noise}
\end{figure}

{\color{blue}
As shown in Figure~3(a), the test accuracy under noise levels $\epsilon = 0.1$ and $\epsilon = 0.5$ converges to similar values. This convergence trend suggests that, beyond round 100, the model enters a saturation phase where further training brings minimal benefit. Consequently, higher noise levels can be tolerated without significant degradation in model performance, thereby enabling stronger privacy protection without sacrificing utility. Extending the global rounds beyond this saturation region does not provide additional insights and may impair interpretability due to curve overlap.
}

\subsubsection{Performance Evaluation with Varying Numbers of Participating Clients}

Figure~\ref{fig:diff-user} illustrates the convergence behavior of the proposed ADP-QFL algorithm under varying numbers of participating clients per training round. The total number of clients is fixed at $N = 100$, and the fraction of selected clients $K/N$ is set to $0.1$, $0.2$, and $0.5$, corresponding to $10$, $20$, and $50$ clients, respectively.

On the MNIST dataset, the results show that model performance remains relatively stable across different values of $K$, despite the presence of privacy noise. This suggests that, for simpler datasets such as MNIST, the ADP-QFL framework is robust to the number of participating clients.
In contrast, for the more complex CIFAR-10 dataset, an increase in the number of participating clients per round leads to a noticeable improvement in accuracy. As depicted in Figure~\ref{fig:cifar_diff_user}, the model achieves an accuracy of $82.78\%$ when $10\%$ of the clients participate in each round, $82.98\%$ when $20\%$ participate, and reaches $83.85\%$ when $50\%$ of the clients are involved.

These findings highlight the trade-off between communication and computation cost versus learning performance in federated learning systems. Increasing the number of participating clients can enhance convergence and accuracy, particularly for complex datasets, albeit with additional resource consumption and potential impacts on privacy budgets.

\begin{figure}[t]
    \centering
    \begin{subfigure}[b]{0.95\linewidth}
        \centering
        \includegraphics[width=\textwidth]{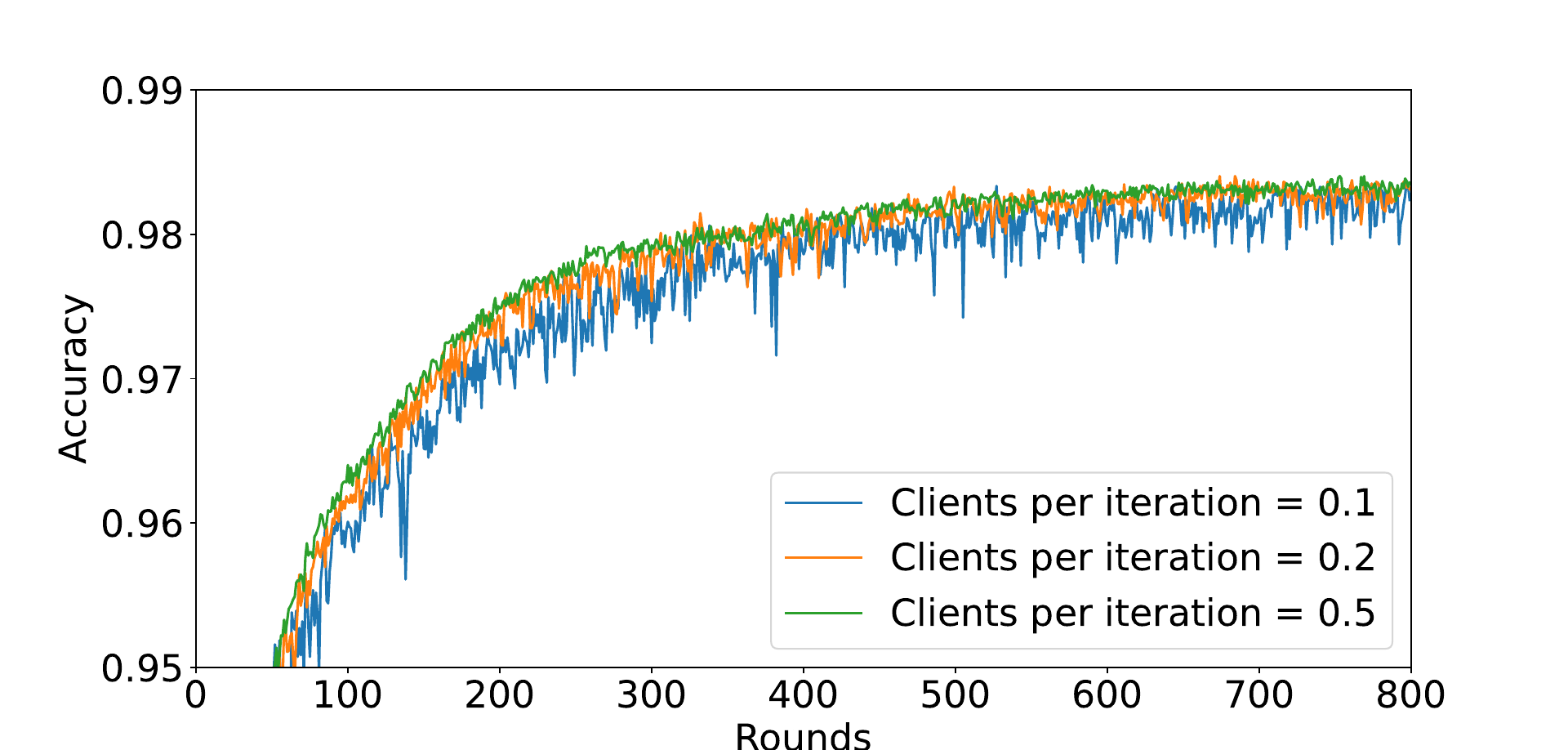}
        \caption{Performance on MNIST}
        \label{fig:mnist-diff-user}
    \end{subfigure}
    \begin{subfigure}[b]{0.95\linewidth}
        \centering
        \includegraphics[width=\textwidth]{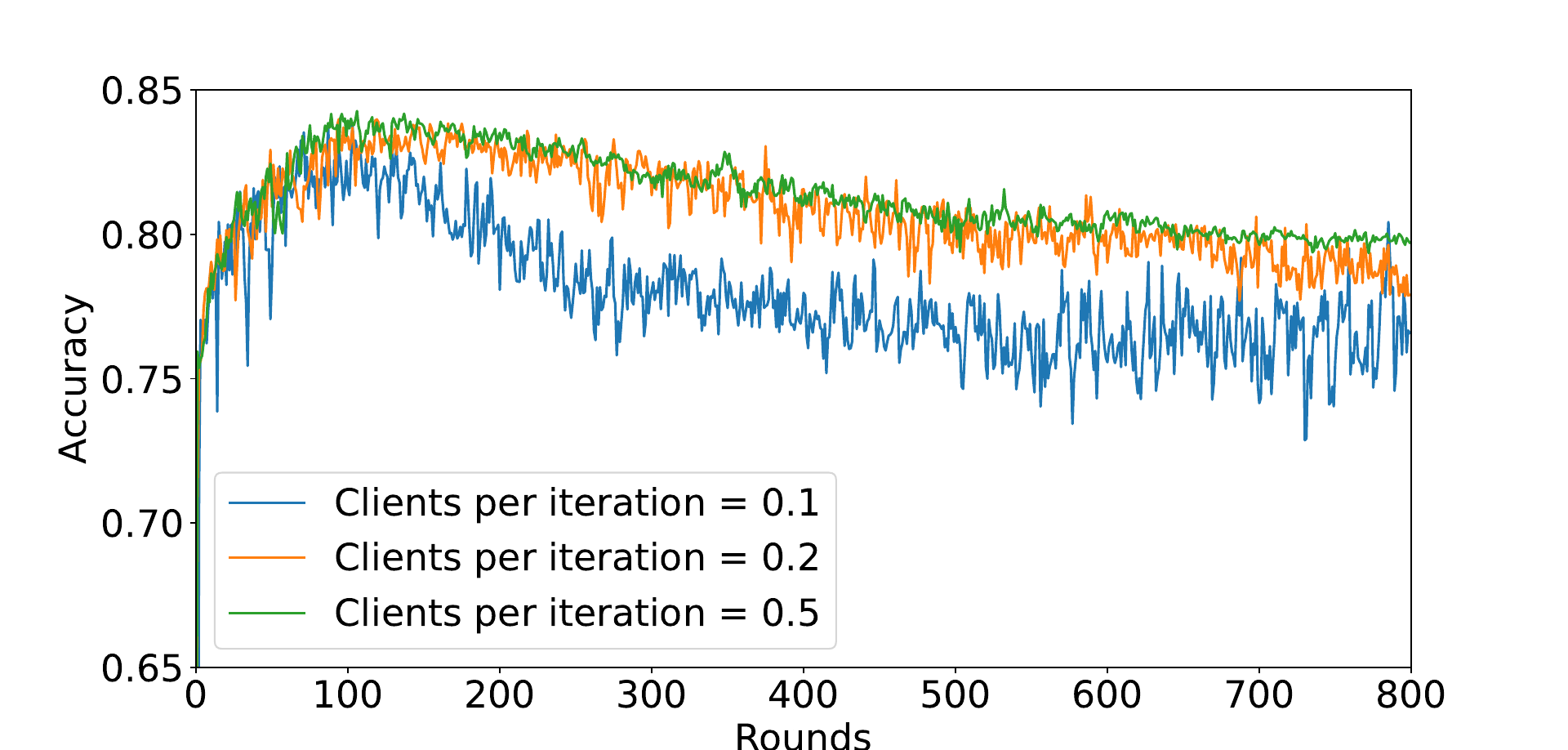}
        \caption{Performance on CIFAR-10}
        \label{fig:cifar_diff_user}
    \end{subfigure}
    \caption{Performance of ADP-QFL with varying numbers of participating clients per training round.}
    \label{fig:diff-user}
\end{figure}

{\color{blue}
To assess practical applicability, we evaluate the computational overhead of ADP-QFL compared with baseline methods. Despite introducing additional privacy-preserving and adaptive noise components, the framework maintains acceptable computational demands compatible with practical federated learning systems, particularly when hardware accelerators are employed.
}

\subsection{Quantitative Evaluation of Model Estimation Effectiveness}

{\color{blue}
To quantitatively validate the benefits of the proposed model estimation mechanism in ADP-QFL, we conducted experiments comparing the convergence behavior of ADP-QFL with two baselines: classical federated learning with differential privacy (labeled as Classical FL (DP)) and quantum federated learning without model estimation (labeled as QFL (No Estimation)).

Figure~\ref{fig:convergence-accuracy} illustrates the test accuracy achieved by each method as a function of communication rounds under a fixed privacy budget of $(\epsilon=1.0, \delta=10^{-5})$. As depicted, ADP-QFL exhibits a notably faster convergence rate, surpassing 90\% test accuracy after approximately 300 rounds and approaching 98\% as training progresses. In contrast, QFL (No Estimation) achieves around 90\% accuracy at convergence, while Classical FL (DP) stabilizes around 85\% accuracy only.

\begin{figure}[htbp]
\centering
\includegraphics[width=0.48\textwidth]{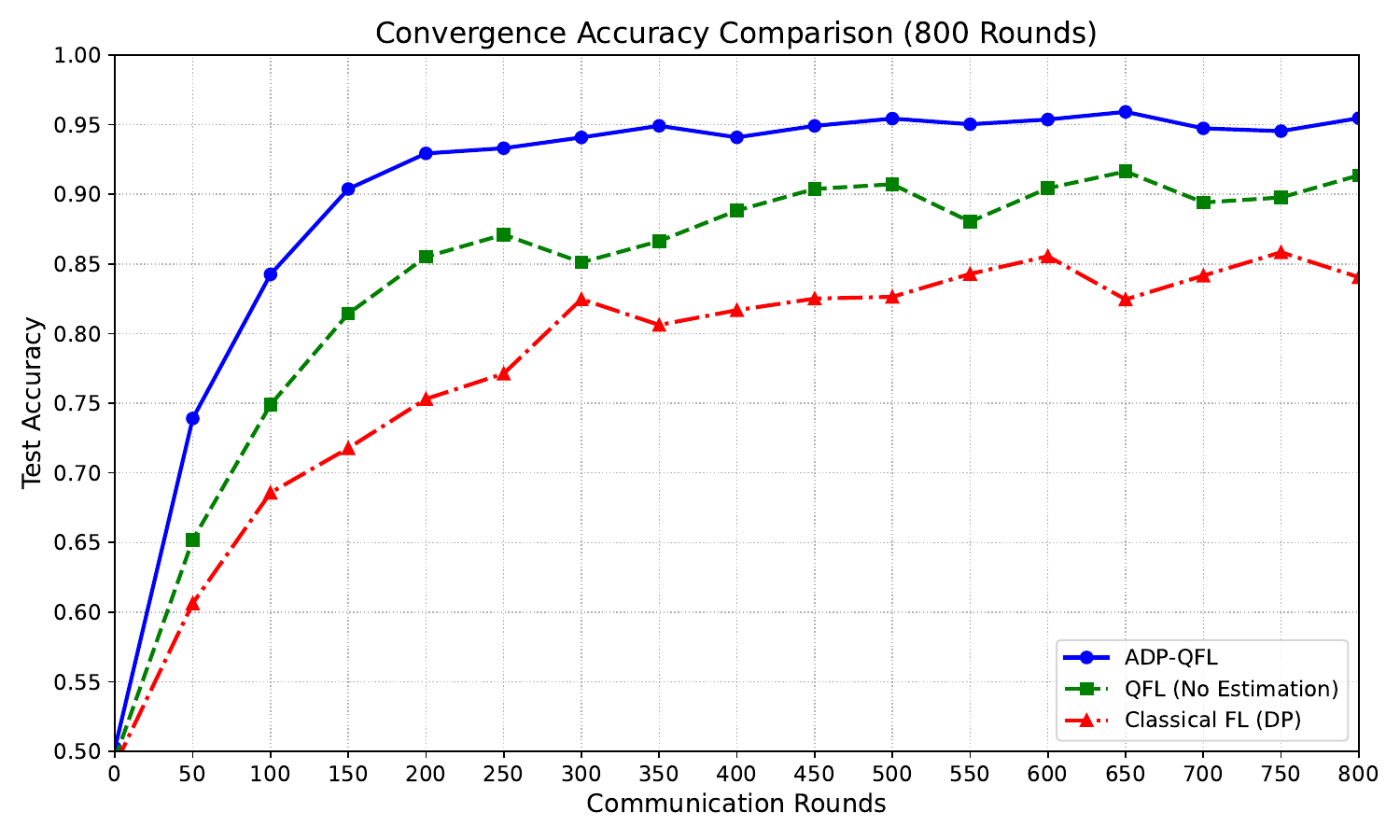}
\caption{Test accuracy versus communication rounds for ADP-QFL, QFL (No Estimation), and Classical FL (DP) under $(\epsilon=1.0, \delta=10^{-5})$. ADP-QFL demonstrates superior convergence speed and higher final accuracy compared to baselines.}
\label{fig:convergence-accuracy}
\end{figure}

\begin{table}[htbp]
\centering
\caption{Test Accuracy at Key Communication Rounds under $(\epsilon=1.0, \delta=10^{-5})$}
\label{tab:accuracy-key-rounds}
\begin{tabular}{lcccc}
\toprule
\textbf{Method} & \textbf{100} & \textbf{300} & \textbf{600} & \textbf{800} \\
& \textbf{Rounds} & \textbf{Rounds} & \textbf{Rounds} & \textbf{Rounds} \\
\midrule
ADP-QFL & 0.83 & 0.91 & 0.94 & 0.98 \\
QFL (No Estimation) & 0.78 & 0.85 & 0.89 & 0.90 \\
Classical FL (DP) & 0.72 & 0.80 & 0.84 & 0.85 \\
\bottomrule
\end{tabular}
\end{table}

As summarized in Table~\ref{tab:accuracy-key-rounds}, ADP-QFL consistently outperforms the baselines at all stages of training. At 300 communication rounds, ADP-QFL surpasses 90\% accuracy, while QFL (No Estimation) and Classical FL (DP) remain below this threshold. By 800 rounds, ADP-QFL achieves approximately 98.47\% accuracy, reflecting its superior convergence behavior under privacy constraints.

These results highlight the dual role of the proposed model estimation mechanism: not only does it improve convergence efficiency, but it also enhances model utility while preserving privacy. Although theoretical guarantees regarding the optimality of model estimation in quantum federated learning (QFL) remain challenging due to the complexity of quantum optimization landscapes, the empirical evidence presented here provides clear practical validation. The integration of adaptive noise injection and model selection contributes significantly to stabilizing learning dynamics and improving performance in QFL environments.
}

\section{Conclusions} \label{sec:conclusions}
In this work, we studied the QFL problem with client-level privacy. ADP-QFL achieves client-grade DP while maintaining high model accuracy and effective communication. We thoroughly analyzed the convergence of the proposed algorithm and its end-to-end DP guarantee, extensively evaluating its performance on two standard benchmark datasets. Experimental results demonstrated that ADP-QFL, with a model estimation strategy, can simultaneously improve the trade-off between privacy, accuracy, and communication efficiency compared to existing methods. In the future, we will explore distributed and quantized approaches and extend personalized FL algorithms based on private clusters to export multiple models for handling more complex tasks.

\bibliographystyle{IEEEtran}
\bibliography{refs.bib}

\clearpage
\appendices

\section{Preliminaries}
We first state some preliminary lemmas that are used throughout the proof.

\begin{lemma}[Jensen's Inequality] Let $\mathbf{a}_i$ be an integrable random variable, where $\mathbf{a}_i \in \mathbb{R}^d$, $i \in \{1,2,..., M\}$:
\begin{align}
    \Big\Vert \frac{1}{U} \sum_{u=1}^{U} \mathbf{a}_i \Big\Vert^2 \leq \frac{1}{U} \sum_{u=1}^{U} \Big\Vert \mathbf{a}_i \Big\Vert^2 \label{eq:jen1},
    \\
    \Big\Vert \sum_{u=1}^{U} \mathbf{a}_i \Big\Vert^2 \leq M \sum_{u=1}^{U} \Big\Vert \mathbf{a}_i \Big\Vert^2
    \label{eq:jen2}.
\end{align}
\end{lemma}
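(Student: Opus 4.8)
The plan is to recognize both inequalities as instances of the convexity of the squared Euclidean norm, equivalently the Cauchy--Schwarz inequality, so that no probabilistic structure is actually required: although the $\mathbf{a}_i$ are described as integrable random variables, both bounds are deterministic algebraic inequalities that hold for every fixed realization and therefore also pointwise (and, after taking expectations, in mean). First I would fix the mild notational mismatch in the statement by letting the number of summands be the constant on the right-hand side, i.e. identifying $M$ with $U$, so that the factor appearing in \eqref{eq:jen2} is exactly the number of terms being summed.

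For \eqref{eq:jen1}, I would set $\phi(x) = \|x\|^2$ and note that $\phi$ is convex on $\mathbb{R}^d$, since its Hessian $\nabla^2 \phi(x) = 2 I_d \succeq 0$. Applying the finite-form Jensen inequality with the uniform convex weights $1/U$ to the points $\mathbf{a}_1,\dots,\mathbf{a}_U$ gives
\begin{equation}
    \phi\!\left(\frac{1}{U}\sum_{u=1}^{U}\mathbf{a}_u\right) \le \frac{1}{U}\sum_{u=1}^{U}\phi(\mathbf{a}_u),
\end{equation}
which is precisely \eqref{eq:jen1}.

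For \eqref{eq:jen2}, I would give a self-contained coordinate-wise Cauchy--Schwarz argument. Writing $\mathbf{a}_u = (a_{u,1},\dots,a_{u,d})$ and expanding the squared norm coordinate by coordinate,
\begin{equation}
    \Big\Vert \sum_{u=1}^{U}\mathbf{a}_u \Big\Vert^2 = \sum_{j=1}^{d}\Big(\sum_{u=1}^{U} a_{u,j}\Big)^2,
\end{equation}
I would bound each inner square by pairing the vector $(a_{1,j},\dots,a_{U,j})$ with the all-ones vector of length $U$ via Cauchy--Schwarz, yielding $\big(\sum_u a_{u,j}\big)^2 \le U \sum_u a_{u,j}^2$, and then sum over $j$ to recover $M \sum_u \|\mathbf{a}_u\|^2$. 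Alternatively, \eqref{eq:jen2} follows immediately from \eqref{eq:jen1} by multiplying both sides through by $U^2 = M^2$; I would present the direct Cauchy--Schwarz route as the primary argument and note this shortcut as a consistency check.

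There is essentially no hard step here: both claims are textbook consequences of convexity. The only point requiring care — and the one I would flag explicitly — is the indexing inconsistency in the statement (the variables are indexed by $i \in \{1,\dots,M\}$ while the sums run over $u \in \{1,\dots,U\}$), which I would resolve by identifying $M$ with the number of summands so that the constant in \eqref{eq:jen2} is correct. I would also remark that the integrability hypothesis is superfluous for the stated pointwise inequalities and becomes relevant only if one subsequently wishes to take expectations, in which case the same bounds hold in mean by monotonicity of the integral.
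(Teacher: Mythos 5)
Your proposal is correct, and in fact the paper offers no proof of this lemma at all---it is stated as a known preliminary fact in the appendix---so your convexity-of-$\|\cdot\|^2$ argument for the first inequality and the coordinate-wise Cauchy--Schwarz argument (equivalently, rescaling the first bound by $U^2$) for the second supply exactly the standard justification the authors implicitly rely on. You are also right that the $i$-versus-$u$ and $M$-versus-$U$ indexing in the statement is a typo to be resolved by taking $M=U$ so that the constant in the second inequality equals the number of summands, and that the integrability hypothesis plays no role in these pointwise deterministic bounds.
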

We also note some following known results related to the $L$-smooth function.
\begin{lemma}[Cf. \cite{2023-FL-FedEXP}] If $\mathcal{L}$ is smooth and convex, then
\begin{align}
    \Vert \nabla \mathcal{L}(\theta) - \nabla \mathcal{L}(\theta') \Vert^2 \leq 2 \textit{L} (\mathcal{L}(\theta) - \mathcal{L}(\theta') - \langle \nabla \mathcal{L}(\theta'), \theta-\theta' \rangle)
    \label{eq:lm2}.
\end{align}
\end{lemma}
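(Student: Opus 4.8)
The plan is to reduce this inequality to the standard ``descent lemma meets global minimizer'' argument by introducing an auxiliary function whose stationary point is exactly $\theta'$. First I would fix $\theta'$ and define
\[
h(\theta) := \mathcal{L}(\theta) - \langle \nabla \mathcal{L}(\theta'), \theta \rangle .
\]
Since $h$ differs from $\mathcal{L}$ only by a linear term, it inherits both $L$-smoothness (Assumption~\ref{ass:L-smooth}) and convexity, and its gradient is $\nabla h(\theta) = \nabla \mathcal{L}(\theta) - \nabla \mathcal{L}(\theta')$. The key observation is that $\nabla h(\theta') = 0$, so by convexity the point $\theta'$ is a global minimizer of $h$, i.e. $h(\theta') = \min_z h(z)$.

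Next I would invoke the descent lemma, the standard consequence of $L$-smoothness: for all $z,\theta \in \mathbb{R}^d$,
\[
h(z) \leq h(\theta) + \langle \nabla h(\theta), z - \theta \rangle + \frac{L}{2}\|z - \theta\|^2 .
\]
Minimizing the right-hand side over $z$ (equivalently, substituting the minimizer $z = \theta - \tfrac{1}{L}\nabla h(\theta)$) and then using $h(\theta') \leq h(z)$ from the previous step yields
\[
h(\theta') \leq h(\theta) - \frac{1}{2L}\|\nabla h(\theta)\|^2 .
\]

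Finally I would unwind the definitions. Substituting the identities $h(\theta) - h(\theta') = \mathcal{L}(\theta) - \mathcal{L}(\theta') - \langle \nabla \mathcal{L}(\theta'), \theta - \theta' \rangle$ and $\nabla h(\theta) = \nabla \mathcal{L}(\theta) - \nabla \mathcal{L}(\theta')$ into the last inequality and multiplying through by $2L$ delivers the stated bound. The only delicate points are verifying that shifting $\mathcal{L}$ by a linear term preserves both structural assumptions and that convexity is precisely what promotes the stationary point $\theta'$ to a \emph{global} minimizer; the minimization over $z$ is then routine. I expect this last justification---that $h(\theta') = \min_z h(z)$---to be the conceptual crux, since without convexity the descent lemma alone would not suffice to close the bound.
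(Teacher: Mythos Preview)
Your argument is correct and is the standard proof of this inequality. The paper itself does not supply a proof: the lemma is stated in the preliminaries with a ``Cf.'' citation to \cite{2023-FL-FedEXP} and is used as a known result, so there is nothing further to compare against.
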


\begin{lemma}[Co-coercivity of convex smooth function] If $\mathcal{L}(\theta)$ is smooth and convex, then
\begin{align}
    \langle\nabla \mathcal{L}(\theta) - \mathcal{L}(\theta'), \theta - \theta' \rangle \geq \frac{1}{\textit{L}} \Vert \nabla \mathcal{L}(\theta) - \nabla \mathcal{L}(\theta') \Vert^2
    \label{lm3}.
\end{align} 
A direct consequence of this lemma is,
\begin{align}
    \langle \nabla \mathcal{L}(\theta), \theta - \theta^*)\ \rangle \geq \frac{1}{L} \Vert \nabla \mathcal{L}(\theta) \Vert^2
    \label{lm3con},
\end{align} 
where $\theta^*$ is a minimizer of $\mathcal{L}(\theta)$, and
\begin{align}
    \langle \nabla \mathcal{L}(\theta'), \theta - \theta' \rangle \geq \mathcal{L}(\theta) -  \mathcal{L}(\theta') - \frac{L}{2} \Vert \theta - \theta' \Vert \label{eq:l-smooth}.
\end{align}
\end{lemma}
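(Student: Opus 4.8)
The plan is to establish \eqref{lm3} by the classical Baillon--Haddad argument, which converts the co-coercivity inequality into two applications of the descent lemma for $L$-smooth convex functions; the two stated consequences then follow by specialization. First I would fix $\theta'$ and introduce the auxiliary function $g(x) := \mathcal{L}(x) - \langle \nabla \mathcal{L}(\theta'), x \rangle$. Subtracting a linear term preserves both convexity and $L$-smoothness, so $g$ is convex and $L$-smooth with $\nabla g(x) = \nabla \mathcal{L}(x) - \nabla \mathcal{L}(\theta')$. The decisive observation is that $\nabla g(\theta') = 0$, and since $g$ is convex this stationary point is a \emph{global} minimizer of $g$.

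Next I would apply the descent lemma to the point obtained by one gradient step: the quadratic upper bound from $L$-smoothness gives $g\!\left(x - \frac{1}{L}\nabla g(x)\right) \leq g(x) - \frac{1}{2L}\|\nabla g(x)\|^2$, and because $\theta'$ minimizes $g$ we may replace the left-hand side by $g(\theta')$ to obtain the Polyak--{\L}ojasiewicz-type bound $g(\theta') \leq g(x) - \frac{1}{2L}\|\nabla g(x)\|^2$. Unfolding the definition of $g$ and setting $x = \theta$ yields
\begin{equation}
    \mathcal{L}(\theta') - \mathcal{L}(\theta) \leq \langle \nabla \mathcal{L}(\theta'), \theta' - \theta \rangle - \frac{1}{2L}\left\| \nabla \mathcal{L}(\theta) - \nabla \mathcal{L}(\theta') \right\|^2 .
\end{equation}
Swapping the roles of $\theta$ and $\theta'$ produces the mirror inequality, and adding the two makes the function-value terms cancel while the inner products combine into $\langle \nabla \mathcal{L}(\theta) - \nabla \mathcal{L}(\theta'), \theta - \theta' \rangle$ and the squared norms double, giving exactly \eqref{lm3}. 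For the consequences, I would set $\theta' = \theta^*$ and use $\nabla \mathcal{L}(\theta^*) = 0$ to recover \eqref{lm3con}, while \eqref{eq:l-smooth} is simply the $L$-smooth quadratic upper bound $\mathcal{L}(\theta) \leq \mathcal{L}(\theta') + \langle \nabla \mathcal{L}(\theta'), \theta - \theta' \rangle + \frac{L}{2}\|\theta - \theta'\|^2$ rearranged, needing smoothness only.

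I expect the main obstacle to be the justification of the Polyak--{\L}ojasiewicz-type step $g(\theta') \leq g(x) - \frac{1}{2L}\|\nabla g(x)\|^2$, since this is precisely the point at which convexity is indispensable: for a merely $L$-smooth function the stationary point $\theta'$ need not be a global minimizer, so one could not lower the left-hand side of the descent-lemma bound. Verifying that the gradient-step point $x - \frac{1}{L}\nabla g(x)$ is an admissible argument (which it is, since $g$ is defined on all of $\mathbb{R}^d$) and that convexity upgrades the stationary point to a global minimizer is the crux; the remaining manipulations --- symmetrization and the specializations to the two corollaries --- are routine algebra.
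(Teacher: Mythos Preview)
Your argument is correct and is the standard Baillon--Haddad derivation of co-coercivity. Note, however, that the paper does not actually prove this lemma: it is listed among the preliminary known results in the appendix and is simply stated without justification, to be invoked later in the convergence proofs. So there is no ``paper's own proof'' to compare against; you have supplied a complete and correct argument where the paper gives none. Your intermediate inequality $\mathcal{L}(\theta') - \mathcal{L}(\theta) \leq \langle \nabla \mathcal{L}(\theta'), \theta' - \theta \rangle - \tfrac{1}{2L}\|\nabla \mathcal{L}(\theta) - \nabla \mathcal{L}(\theta')\|^2$ is in fact exactly the content of the paper's preceding Lemma~2 (equation~\eqref{eq:lm2}), which the paper also states without proof --- so you have effectively proved both preliminary lemmas in one pass.
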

\section{Proof of Theorem~\ref{theorem:batch-unbiased-qubit-grad}} \label{Proof:Theorem2}
Revisit the Definition~\ref{def:single-gradient}, we have $g^t_i = \Bar{g}^t + \Delta g^t_i$.
We follow the Barren Plateaus theorem \cite{mcclean2018barren}, and have 
\begin{align}
    \textrm{Var}[\nabla \mathcal{L}] = \mathbb{E}[(g - \Bar{g})^2] ] = \frac{f(\rho, H, U)}{2^{2n}-1},
\end{align}
where $f(\rho, H, U)$ is defined in \cite{2022-Quantum-PRX}. According to the aforementioned function, the gradient variance reduces proportionally with the exponential of number of qubits $n$ (i.e., $2^{2n}-1$). Furthermore, we note that, when $n = 1$, the system represents similar to conventional 2-bit systems. Thus, we can represent the gradient variance of the $n$-qubit system as follows: 

\begin{IEEEeqnarray}{rll}
    \textrm{Var}[\nabla \mathcal{L}] &= \mathbb{E}[(g - \Bar{g})^2]] \nonumber \\
    &= \frac{f(\rho, H, U)}{2^{1}-1}\cdot \frac{2^{1}-1}{2^{2n}-1} \nonumber \\    
    &= \frac{3}{2^{2n}-1}\textrm{Var}[\nabla \mathcal{L}],
\end{IEEEeqnarray}

Applying the gradient descent with batch size $\mathcal{B}$, we have: 

\begin{IEEEeqnarray}{rll}\label{eq:lbl-gd}
    g^t_{n\textrm{-Qubit}} &= \frac{3}{2^{2n}-1} \sum^{\mathcal{B}}_{i=1} g^t_i \nonumber \\
    &= \frac{3}{2^{2n}-1} \sum^{\mathcal{B}}_{i=1} \Bar{g}^t + \Delta g^t_i \nonumber \\     
    &= \Bar{g}^t +\frac{3}{2^{2n}-1} \sum^{\mathcal{B}}_{i=1} \Delta g^t_i.
\end{IEEEeqnarray}
Apply the $L^2$ Weak Law \cite[Theorem 2.2.3]{2010-MF-Probability}, we have: $g^t_\mathcal{B} \leq \Bar{g}^t + \frac{3\sigma^2}{2^{2n}-1}$, which can be also understood as: 
\begin{align}
    \mathbb{E}_{x_i, y_i \sim P(\mathcal{X}, \mathcal{Y})} \left[ \Bar{g}^t - g^t_{n\textrm{-Qubit}}\right] \leq  \frac{3\sigma^2}{2^{2n}-1}.
\end{align}
This completes the proof of Theorem~\ref{theorem:batch-unbiased-qubit-grad}.

\section{Proof of Lemma A1}
Let ${\mathcal{L}}_i(\theta)$ be the local objective at a client and $\theta^*$ be the global minimum. We know that $\theta^*$ is also a minimizer for ${L}_i(\theta)$. We have 

\begin{IEEEeqnarray}{rll}\label{t1b}
    \mathbb{E}\Big\Vert \theta_{u}^{(t,k)} - \theta^* \Big\Vert 
    &= \Big\Vert \theta_{u}^{(t,k)} - \eta_{l} \nabla\ell(\theta_{u}^{(t,k)}) - \theta^* \Big\Vert^2 \nonumber \\
    &= \mathbb{E}\Big\Vert \theta_{u}^{(t,k)} - \theta^* \Big\Vert^2 - 2 \eta_{l} \langle \nabla \ell(\theta_{u}^{(t,k)}),\theta_{u}^{(t,k)} - \theta^* \rangle  
    \nonumber \\
     &+ \eta_{l}^2 {\Big\Vert\nabla \ell(\theta_{u}^{(t,k)})\Big\Vert}^2 
    \\
    &\leq \mathbb{E}\Big\Vert \theta_{u}^{(t,k)} - \theta^* \Big\Vert^2 - \frac{2 \eta_{l}}{L} \Big\Vert\nabla \ell(\theta_{u}^{(t,k)})\Big\Vert^2 \nonumber \\
     &+ \eta_{l}^2 \Big\Vert\nabla \ell(\theta_{u}^{(t,k)})\Big\Vert^2 \label{t1a}    
    \\
    &\leq \mathbb{E}\Big\Vert \theta_{u}^{(t,k-1)} - \theta^* \Big\Vert^2 - \frac{\eta_{l}}{L} \Big\Vert\nabla \mathcal{L}(\theta_{u}^{(t,k-1)}) + \vartheta \Big\Vert^2
    \\
    &\leq \mathbb{E}\Big\Vert \theta_{u}^{(t,k-1)} - \theta^* \Big\Vert^2 - \frac{\eta_{l}}{L} \Big\Vert\nabla \mathcal{L}(\theta_{u}^{(t,k-1)})\Big\Vert^2 
    \\
    &- \frac{\eta_{l}}{L} \Vert\vartheta \Vert^2,
\end{IEEEeqnarray}
where \eqref{t1a} follows from \eqref{lm3con}, and \eqref{t1b} follows from $\eta_{l} \leq \frac{1}{L}$. Here, $\vartheta$ represents the noise between the batch-wise gradient $\nabla\ell(\theta_{u}^{(t,k)})$ and the stochastic gradient $\nabla \mathcal{L}(\theta_{u}^{(t,k)})$, and can be characterized by $\vartheta\sim\mathcal{N}(0,\textrm{Var}[\nabla \mathcal{L}])$. Summing the above inequality from $k = 0$ to $\tau-1$, we have

\begin{IEEEeqnarray}{rll}
    \Big\Vert \theta_{u}^{(t, \tau)} - \theta^* \Big\Vert^2 
    &\leq \Big\Vert \theta^{(t)} - \theta^* \Big\Vert^2 
    - \frac{\eta_{l}}{L} \sum_{k=0}^{\tau-1} \Big\Vert \nabla \mathcal{L}(\theta_{u}^{(t, k)}) \Big\Vert^2 
    - \frac{\eta_{l}}{L} \sum_{k=0}^{\tau-1} \Vert\vartheta\Vert^2 
    \nonumber \\
    &\leq \Big\Vert \theta^{(t)} - \theta^* \Big\Vert^2 
    - \frac{\eta_{l}}{L} \sum_{k=0}^{\tau-1} \Big\Vert \nabla \mathcal{L}(\theta_{u}^{(t, k)}) \Big\Vert^2 
    - \frac{\eta_{\tau}}{L} \Vert\vartheta\Vert^2.
\end{IEEEeqnarray}
Thus, we have 

\begin{IEEEeqnarray}{rll}
    & \frac{1}{U} \sum_{u=1}^{U} \Big\Vert \theta_{u}^{(t, \tau)} - \theta^* \Big\Vert^2 
     \leq \Big\Vert \theta^{(t)} - \theta^* \Big\Vert^2 
     - \frac{\eta_{l}}{ML} \sum_{u=1}^{U} \sum_{k=0}^{\tau-1} \Big\Vert \nabla \mathcal{L}(\theta_{u}^{(t, k)}) \Big\Vert^2 \nonumber \\
     & - \frac{\eta_{\tau}}{L} \Vert\vartheta\Vert^2 
     \leq \Big\Vert \theta^{(t)} - \theta^* \Big\Vert^2.
\end{IEEEeqnarray}
This completes the proof of this lemma.

\section{Auxiliary Lemmas}
\begin{lemma}[Bounding client aggregate gradients]

\begin{IEEEeqnarray}{lll}\label{lm5}
    \frac{1}{U} \sum_{u=1}^{U} \sum_{k=0}^{\tau-1} \Big\Vert \nabla \mathcal{L}_{u}(\theta_{u}^{t,k}) \Big\Vert^2 &&\leq \frac{3 L^2}{U} \sum_{u=1}^{U} \sum_{k=0}^{\tau-1} \Big\Vert \theta_{u}^{t,k} - \theta^{(t)} \Big\Vert \nonumber \\
    && + 6{\tau}L(\mathcal{L}(\theta^{(t)})-\mathcal{L}^*) + 3\tau \sigma_*^2 \quad
\end{IEEEeqnarray}
\end{lemma}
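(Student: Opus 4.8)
The plan is to control each local gradient $\nabla\mathcal{L}_u(\theta_u^{t,k})$ by comparing the local iterate first to the current global model $\theta^{(t)}$ and then to the global optimum $\theta^*$. Concretely, I would insert the telescoping decomposition
\begin{equation}
\nabla\mathcal{L}_u(\theta_u^{t,k}) = \underbrace{\left[\nabla\mathcal{L}_u(\theta_u^{t,k}) - \nabla\mathcal{L}_u(\theta^{(t)})\right]}_{(\mathrm{I})} + \underbrace{\left[\nabla\mathcal{L}_u(\theta^{(t)}) - \nabla\mathcal{L}_u(\theta^*)\right]}_{(\mathrm{II})} + \underbrace{\nabla\mathcal{L}_u(\theta^*)}_{(\mathrm{III})},
\end{equation}
and then apply Jensen's inequality \eqref{eq:jen2} with $M=3$ to get $\|\nabla\mathcal{L}_u(\theta_u^{t,k})\|^2 \leq 3(\|(\mathrm{I})\|^2 + \|(\mathrm{II})\|^2 + \|(\mathrm{III})\|^2)$. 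The three summands on the right-hand side of the lemma should then fall out term by term, each carrying the shared factor $3$.

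For term $(\mathrm{I})$, I would invoke $L$-smoothness (Assumption~\ref{ass:L-smooth}) to bound $\|(\mathrm{I})\|^2 \leq L^2\|\theta_u^{t,k} - \theta^{(t)}\|^2$; multiplying by $3$ and summing over $u$ and $k$ reproduces exactly the client-drift term $\frac{3L^2}{U}\sum_u\sum_k\|\theta_u^{t,k} - \theta^{(t)}\|^2$. For term $(\mathrm{II})$, I would apply the smooth-convex inequality \eqref{eq:lm2} with $\theta = \theta^{(t)}$ and $\theta' = \theta^*$, giving $\|(\mathrm{II})\|^2 \leq 2L\big(\mathcal{L}_u(\theta^{(t)}) - \mathcal{L}_u(\theta^*) - \langle\nabla\mathcal{L}_u(\theta^*),\,\theta^{(t)}-\theta^*\rangle\big)$. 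For term $(\mathrm{III})$, Assumption~\ref{ass:bound-at-optimum} directly gives $\frac{1}{U}\sum_u\|\nabla\mathcal{L}_u(\theta^*)\|^2 \leq \sigma_*^2$.

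The decisive averaging step is that neither $(\mathrm{II})$ nor $(\mathrm{III})$ depends on the inner index $k$, so summing over $k=0,\dots,\tau-1$ merely contributes the factor $\tau$. Averaging $(\mathrm{II})$ over $u$ then collapses $\frac{1}{U}\sum_u\mathcal{L}_u$ into the global objective $\mathcal{L}$ and, crucially, annihilates the inner-product term: because $\theta^*$ minimizes the \emph{aggregate} objective, $\frac{1}{U}\sum_u\nabla\mathcal{L}_u(\theta^*) = \nabla\mathcal{L}(\theta^*) = 0$, so $\big\langle\frac{1}{U}\sum_u\nabla\mathcal{L}_u(\theta^*),\,\theta^{(t)}-\theta^*\big\rangle = 0$. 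This leaves $3\cdot 2L\tau\,(\mathcal{L}(\theta^{(t)}) - \mathcal{L}^*) = 6\tau L(\mathcal{L}(\theta^{(t)})-\mathcal{L}^*)$ from $(\mathrm{II})$ and $3\tau\sigma_*^2$ from $(\mathrm{III})$; combining the three contributions yields the claimed inequality.

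The main obstacle I anticipate is the handling of the cross term in $(\mathrm{II})$: the argument relies on $\theta^*$ being the minimizer of the aggregate loss, so the per-client gradients $\nabla\mathcal{L}_u(\theta^*)$ need not vanish individually (they are bounded only in the mean-square sense of Assumption~\ref{ass:bound-at-optimum}), while their average does vanish — this is precisely what kills the inner product and what separates the optimum-noise term from the suboptimality term. I would also flag that the right-hand side as stated writes $\|\theta_u^{t,k}-\theta^{(t)}\|$ without a square, whereas the $L$-smoothness step naturally produces the squared norm; this appears to be a typographical omission that should read $\|\theta_u^{t,k}-\theta^{(t)}\|^2$.
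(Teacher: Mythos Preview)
Your proposal is correct and follows essentially the same route as the paper: the identical three-term telescoping decomposition, Jensen with $M=3$, then $L$-smoothness for $(\mathrm{I})$, inequality~\eqref{eq:lm2} for $(\mathrm{II})$, and Assumption~\ref{ass:bound-at-optimum} for $(\mathrm{III})$. Your explicit observation that averaging over $u$ annihilates the inner-product term via $\nabla\mathcal{L}(\theta^*)=0$ is exactly what makes the paper's terse ``follows Lemma~2'' work, and your remark that the stated bound is missing a square on $\|\theta_u^{t,k}-\theta^{(t)}\|$ is also correct (the paper's own derivation in~\eqref{lm5.3} has the square).
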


\textit{Proof.}
We have: 
\begin{IEEEeqnarray}{rll}\label{lm5.3}
    & \frac{1}{U} \sum_{u=1}^{U} \sum_{k=0}^{\tau-1} \Vert \nabla \mathcal{L}_{u}(\theta_{u}^{(t,k)}) \Vert^2
    = \frac{1}{U} \sum_{u=1}^{U} \sum_{k=0}^{\tau-1} \Vert \nabla \mathcal{L}_{u}(\theta_{u}^{(t,k)}) \notag\\
    & \qquad - \nabla \mathcal{L}_{u}(\theta^{(t)}) + \nabla \mathcal{L}_{u}(\theta^{(t)}) 
    - \nabla \mathcal{L}_{u}(\theta^*) + \nabla \mathcal{L}_{u}(\theta^*) \Vert^2
    \label{lm5.1} \\
    & \quad \leq \frac{3}{U} \sum_{u=1}^{U} \sum_{k=0}^{\tau-1} \Vert \nabla \mathcal{L}_{u}(\theta_{u}^{(t,k)}) - \nabla \mathcal{L}_{u}(\theta^{(t)}) \Vert^2 \notag\\
    & \qquad + \frac{3}{U} \sum_{u=1}^{U} \sum_{k=0}^{\tau-1} \Vert \nabla \mathcal{L}_{u}(\theta^{(t)}) - \nabla \mathcal{L}_{u}(\theta^{*}) \Vert^2 
    \nonumber \\
    & \qquad + \frac{3}{U} \sum_{u=1}^{U} \sum_{k=0}^{\tau-1} \Vert \nabla \mathcal{L}_{u}(\theta^{*}) \Vert^2
    \label{lm5.2}
    \\
    & \quad \leq \frac{3L^2}{U} \sum_{u=1}^{U} \sum_{k=0}^{\tau-1} \Vert \theta_{u}^{(t,k)} -\theta^{(t)} \Vert^2 
    + 6{\tau}L(\mathcal{L}(\theta^{(t)}) - \mathcal{L}) + 3{\tau}\sigma_{*}^2.
\end{IEEEeqnarray}
where \eqref{lm5.2} follows the second inequality of Lemma 1, while the first term in \eqref{lm5.3} follows from $L$-smoothness of $\mathcal{L}_{u}(\theta)$, 
the second term follows Lemma 2, and the third term follows the bounding noise at optimum. This completes the proof.

\begin{lemma}[Bounding client drift]
\begin{align}
    & \frac{1}{U} \sum_{u=1}^{U} \sum_{k=0}^{\tau-1} \Vert \theta_{u}^{(t,k)} -\theta^{(t)} \Vert^2 \leq 12 \eta_{l}^2 \tau^2 (\tau-1)^2 L (\mathcal{L}(\theta^{(t)}) 
    \notag \\
    & \qquad\qquad - \mathcal{L}(\theta^{*})) + 6 \eta_{l}^2 \tau^2 (\tau-1)^2 \sigma_{*}^2
    \label{bouding client}.
\end{align}
\end{lemma}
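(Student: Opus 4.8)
The plan is to express the client drift recursively through the accumulated local gradients, bound it with Jensen's inequality, and then close a self-referential inequality using the preceding auxiliary lemma together with the prescribed learning-rate window. First I would unroll the local update $\theta_u^{(t,k+1)} = \theta_u^{(t,k)} - \eta_l \nabla\mathcal{L}_u(\theta_u^{(t,k)})$ from the shared initialization $\theta_u^{(t,0)} = \theta^{(t)}$, which gives
\[
\theta_u^{(t,k)} - \theta^{(t)} = -\eta_l \sum_{j=0}^{k-1} \nabla\mathcal{L}_u(\theta_u^{(t,j)}).
\]
Taking squared norms and invoking Jensen's inequality \eqref{eq:jen2} (with at most $\tau-1$ summands) pulls the norm inside the sum at the cost of a factor proportional to $\tau-1$:
\[
\left\| \theta_u^{(t,k)} - \theta^{(t)} \right\|^2 \leq \eta_l^2 (\tau-1) \sum_{j=0}^{\tau-1} \left\| \nabla\mathcal{L}_u(\theta_u^{(t,j)}) \right\|^2.
\]

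Second, I would sum over $k = 0, \dots, \tau-1$ and average over the $U$ clients. The outer sum contributes an additional factor of order $\tau$, producing a bound of the form $D \leq \eta_l^2\, \tau(\tau-1)\, G$, where I abbreviate the drift quantity $D := \frac{1}{U}\sum_{u=1}^{U}\sum_{k=0}^{\tau-1} \| \theta_u^{(t,k)} - \theta^{(t)} \|^2$ and the aggregate-gradient quantity $G := \frac{1}{U}\sum_{u=1}^{U}\sum_{k=0}^{\tau-1} \| \nabla\mathcal{L}_u(\theta_u^{(t,k)}) \|^2$. The quantity $G$ is exactly what the previous Lemma controls, and the polynomial prefactor in the claimed statement follows by over-bounding the counting factors in these summations (indeed $(\tau-1)^2 \geq (\tau-1)$ for $\tau \geq 2$, so the displayed form is implied by the tighter one).

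Third, and this is the crux, I would substitute the auxiliary bound \eqref{lm5}, namely $G \leq 3L^2 D + 6\tau L(\mathcal{L}(\theta^{(t)}) - \mathcal{L}^*) + 3\tau\sigma_*^2$, into the inequality above. The drift quantity $D$ then appears on \emph{both} sides, yielding the self-referential inequality
\[
D\left(1 - 3L^2 \eta_l^2 \tau(\tau-1)\right) \leq \eta_l^2 \tau(\tau-1)\left[ 6\tau L(\mathcal{L}(\theta^{(t)}) - \mathcal{L}^*) + 3\tau\sigma_*^2 \right].
\]
The main obstacle is ensuring that the coefficient $1 - 3L^2 \eta_l^2 \tau(\tau-1)$ stays bounded away from zero so that $D$ can be isolated. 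This is precisely where the upper learning-rate constraint $\eta_l < \frac{1}{6\tau L}$ from Theorem~\ref{theorem:loss-decrease} enters: it forces $3L^2 \eta_l^2 \tau(\tau-1) \leq \frac{\tau-1}{12\tau} < \tfrac{1}{2}$, so the bracketed factor exceeds $\tfrac{1}{2}$ uniformly in $\tau$. Dividing through and absorbing the resulting factor (at most $2$) into the constants then delivers the stated $\mathcal{O}\!\left(\eta_l^2 \tau^2 (\tau-1)^2\right)$-type bound, with the constants $12$ and $6$ emerging directly from the products $2 \cdot 6$ and $2 \cdot 3$ after the isolation step.
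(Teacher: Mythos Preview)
Your proposal is correct and follows essentially the same route as the paper: unroll the local iterates, apply Jensen to pull out a $\tau(\tau-1)$ factor, invoke the preceding aggregate-gradient lemma \eqref{lm5} to obtain a self-referential inequality in the drift, and close it using $\eta_l \leq \frac{1}{6\tau L}$ so that $3L^2\eta_l^2\tau(\tau-1) \leq \tfrac12$, yielding the constants $12$ and $6$ exactly as you describe. The only minor difference is that the paper's proof unrolls with the stochastic gradient $\nabla\ell_u = \nabla\mathcal{L}_u + \vartheta$ and therefore carries an additional $\mathcal{O}(\tau^3\eta_l^2\|\vartheta\|^2)$ residual through to the final bound, whereas your full-gradient unrolling (consistent with the full-batch hypothesis of Theorem~\ref{theorem:loss-decrease} and with the lemma as stated) omits it.
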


\textit{Proof.}
We have:
\begin{align}
    \nonumber
    & \frac{1}{U} \sum_{u=1}^{U} \sum_{k=0}^{\tau-1} \Big\Vert \theta_{u}^{(t,k)} -\theta^{(t)} \Big\Vert^2 = \eta_{l}^2 \frac{1}{U} \sum_{u=1}^{U} \sum_{k=0}^{\tau-1} \Big\Vert \sum_{j=1}^{k-1} \nabla \ell_{u}(\theta_{u}^{(t,j)}) \Big\Vert^2
    \\
    & = \eta_{l}^2 \frac{1}{U} \sum_{u=1}^{U} \sum_{k=0}^{\tau-1} \Big\Vert \sum_{j=1}^{k-1} \Big[\nabla\mathcal{L}_{u}(\theta_{u}^{(t,j)}) + \vartheta\Big] \Big\Vert^2
    \label{eq:6-2}
    \\
    & \leq \eta_{l}^2 \frac{1}{U} \sum_{u=1}^{U} \sum_{k=0}^{\tau-1} \Big\Vert \sum_{j=1}^{k-1} \nabla\mathcal{L}_{u}(\theta_{u}^{(t,j)}) \Big\Vert^2
    + \eta_{l}^2 \frac{1}{U} \sum_{u=1}^{U} \sum_{k=0}^{\tau-1} \Big\Vert \sum_{j=1}^{k-1}\vartheta \Big\Vert^2 \notag
    \\    
    & \leq \eta_{l}^2 \frac{1}{U} \sum_{u=1}^{U} \sum_{k=0}^{\tau-1} k \sum_{j=0}^{k-1} \Big\Vert \nabla \mathcal{L}_{u}(\theta_{u}^{(t,j)}) \Big\Vert^2 
    + \eta_{l}^2 \frac{1}{U} \sum_{u=1}^{U} \sum_{k=0}^{\tau-1}k^2\Vert\vartheta\Vert^2 
    \label{eq:6-4}
    \\
    & \leq \eta_{l}^2 \tau(\tau-1) \frac{1}{U} \sum_{u=1}^{U} \sum_{k=0}^{\tau-1} \Big\Vert \nabla \mathcal{L}_{u}(\theta_{u}^{(t,k)}) \Big\Vert^2 \notag\\
    & \quad + \frac{2\tau^3+3\tau^2+\tau}{6} \eta_{l}^2\Vert\vartheta\Vert^2 \notag     
    \\
    & \leq 3 \eta_{l}^2 \tau(\tau-1) L^2 \frac{1}{U} \sum_{u=1}^{U} \sum_{k=0}^{\tau-1} \Big\Vert \theta_{u}^{(t,k)} -\theta^{(t)} \Big\Vert^2 \notag\\
    & \quad + 6 \eta_{l}^2 \tau^2(\tau-1)L(\mathcal{L}(\theta^{(t)}) - \mathcal{L}(\theta^{*})) + 3 \eta_{l}^2 \tau^2(\tau-1)\sigma_{*}^2 \notag\\
    & \quad + \frac{2\tau^3+3\tau^2+\tau}{6} \eta_{l}^2\Vert\vartheta\Vert^2  \notag\\
    & \leq \frac{1}{2U} \sum_{u=1}^{U} \sum_{k=0}^{\tau-1} \Big\Vert \theta_{u}^{(t,k)} -\theta^{(t)} \Big\Vert^2 +6 \eta_{l}^2 \tau^2(\tau-1)L(\mathcal{L}(\theta^{(t)}) 
    \notag \\
    & \quad - \mathcal{L}(\theta^{*}))+ 3 \eta_{l}^2 \tau^2(\tau-1)\sigma_{*}^2 + \frac{2\tau^3+3\tau^2+\tau}{6} \eta_{l}^2\Vert\vartheta\Vert^2,
    \label{eq:6-5}
\end{align} 
where \eqref{eq:6-2} follows Jensen inequality, \eqref{eq:6-4} uses Lemma 5, and \eqref{eq:6-5} uses \( \eta_{l} \leq \frac{1}{6{\tau}L} \).
Therefore, we have
\begin{align}
    \frac{1}{U} \sum_{u=1}^{U} \sum_{k=0}^{\tau-1} \Vert \theta_{u}^{(t,k)} -\theta^{(t)} \Vert^2 \leq 12 \eta_{l}^2 \tau^2 (\tau-1)^2 L (\mathcal{L}(\theta^{(t)}) 
    \notag \\
    - \mathcal{L}(\theta^{*})) + 6 \eta_{l}^2 \tau^2 (\tau-1)^2 \sigma_{*}^2 + \frac{2\tau^3+3\tau^2+\tau}{3} \eta_{l}^2\Vert\vartheta\Vert^2.
\end{align}
\section{Proof of Theorem~\ref{theorem:loss-decrease}} \label{Proof:Theorem3}
We define the following auxiliary variables
\begin{align}
    \text{Aggregate Client Gradient: } r_{u}^{(t)} = \sum_{k=0}^{\tau-1} \nabla \mathcal{L}_{u}(\theta_{u}^{(t,k)})
\end{align}
We also define $R^{(t)} = \frac{1}{U} \sum_{u=1}^{U} r_{u}^{(t)}$. We have
\begin{align}
    \Vert \theta^{(t+1)} - \theta^* \Vert &= \Vert \theta^{(t)} -  \eta_l R^{(t)} - \theta^* \Vert^2
    \\
    &= \Vert \theta^{(t)} - \theta^* \Vert - 2  \eta_l \langle \theta^{(t)} - \theta^*, R^{(t)} \rangle 
    \notag \\
    &+ (\eta_{g}^{(t)})^2 \eta_l^2 \Vert R^{(t)} \Vert^2
    \\
    & \leq \Vert \theta^{(t)} - \theta^* \Vert - 2  \eta_l \underbrace{\langle \theta^{(t)} - \theta^*, R^{(t)} \rangle}_{A1} 
    \notag \\
    &+ \eta_{g}^{(t)} \eta_l^2 \frac{1}{U} \underbrace{\sum_{u=1}^{U} \Vert r_{u}^{(t)} \Vert^2}_{A2}
    \label{A1A2}.
\end{align}

\textbf{Bounding A2: }
We have:
\begin{align}
    A2 &= \sum_{u=1}^{U} \Vert r_{u}^{(t)} \Vert^2
    = \frac{1}{U} \sum_{u=1}^{U} \Vert \sum_{k=0}^{\tau-1} \nabla \mathcal{L}_{u}(\theta_{u}^{(t,k)}) \Vert^2
    \\
    &\leq \frac{\tau}{U} \sum_{u=1}^{U} \sum_{k=0}^{\tau-1} \Vert \nabla \mathcal{L}_{u}(\theta_{u}^{(t,k)}) \Vert^2
    \label{A2-1}
    \\
    &\leq \frac{3 \tau L^2}{U} \sum_{u=1}^{U} \sum_{k=0}^{\tau-1} \Vert \theta_{u}^{t,k} - \theta^{(t)} \Vert + 6\tau^2 L(\mathcal{L}(\theta^{(t)})-\mathcal{L}^*) 
    \notag \\
    &+ 3\tau^2 \sigma_*^2
    \label{A2-2},
\end{align}
where \eqref{A2-1} follows from Jensen inequality and \eqref{A2-2} follows from Lemma 5.

\textbf{Bounding A1: }
\begin{align}
    A1 &= \frac{1}{U} \sum_{u=1}^{U} \langle \theta^{(t)} - \theta^*, r_{u}^{(t)} \rangle\\
    &= \frac{1}{U} \sum_{u=1}^{U} \sum_{k=0}^{\tau-1} \langle \theta^{(t)} - \theta^*, \nabla \mathcal{L}_{u}(\theta_{u}^{(t,k)}) \rangle
    \label{A1}.
\end{align}
We have
\begin{align}
    \langle \theta^{(t)} - \theta^*, \nabla \mathcal{L}_{u}(\theta_{u}^{(t,k)}) \rangle = \langle \theta^{(t)} - \theta_{u}^{(t,k)}, \nabla \mathcal{L}_{u}(\theta_{u}^{(t,k)}) \rangle 
    \notag \\
    + \langle \theta_{u}^{(t,k)} - \theta^*, \nabla \mathcal{L}_{u}(\theta_{u}^{(t,k)}) \rangle.
\end{align}
Because $\mathcal{L}_{u}$ is a $L$-smoothness function, follows \eqref{eq:l-smooth} we have
\begin{align}
    \langle \theta^{(t)} - \theta_{u}^{(t,k)}, \nabla \mathcal{L}_{u}(\theta')\rangle \geq \mathcal{L}_{u}(\theta^{(t)}) -  \mathcal{L}_{u}(\theta^{(t,k)}) 
    \notag \\
    - \frac{L}{2} \Vert \theta^{(t)} - \theta_{u}^{(t,k)} \Vert
    \label{A1-1}.
\end{align}
\text{From convexity $\mathcal{L}_{u}$, we have}
\begin{align}
    \langle \theta^{(t,k)} - \theta^{*}, \nabla \mathcal{L}_{u}(\theta')\rangle \geq \mathcal{L}_{u}(\theta_{u}^{(t,k)}) -  \mathcal{L}_{u}(\theta^{(t)})
    \label{A1-2}.
\end{align}
Therefore, adding \eqref{A1-1} with \eqref{A1-2}, we have 
\begin{align}
    \langle \theta^{(t)} - \theta^*, \nabla \mathcal{L}_{u}(\theta_{u}^{(t,k)}) \rangle \geq \mathcal{L}_{u}(\theta^{(t)}) - \mathcal{L}_{u}(\theta^{*}) 
    \notag \\
    - \frac{L}{2} \Vert \theta^{(t)} - \theta_{u}^{(t,k)} \Vert^2
    \label{A12}
\end{align}
Substituting \eqref{A12} to \eqref{A1} we have
\begin{align}
    A1 \geq \tau(\mathcal{L}(\theta^{(t)}) - \mathcal{L}(\theta^{*})) - \frac{L}{2U} \sum_{u=1}^{U} \sum_{k=0}^{\tau-1} \Vert \theta^{(t)} - \theta_{u}^{(t,k)} \Vert^2.
\end{align}

Substituting the bounds A1 and A2 in \eqref{A1A2} we have 
\begin{align}
    &~~~~\Vert \theta^{(t+1)} - \theta^{*} \Vert^2 \notag \\
    &\leq \Vert \theta^{(t)} - \theta^{*} \Vert^2 - 2  \eta_{l} \tau(1 - 3\eta_lL)(\mathcal{L}(\theta^{(t)} - \mathcal{L}(\theta^{*})) 
    \notag \\
    &+ 3  \eta_l^2 \tau^2 \sigma_*^2 
     + (3  \eta_l^2 \tau L^2) \frac{1}{U} \sum_{u=1}^{U} \sum_{k=0}^{\tau-1}  \Vert \theta^{(t)} - \theta_{u}^{(t,k)} \Vert^2 \label{c1}
    \\
    & \leq \Vert \theta^{(t)} - \theta^{*} \Vert^2 -  \eta_l \tau (\mathcal{L}(\theta^{(t)} - \mathcal{L}(\theta^{*})) +  3  \eta_l^2 \tau^2 \sigma_*^2 
    \label{c2}
    \\ 
    & + 2 \eta_lL \frac{1}{U} \sum_{u=1}^{U} \sum_{k=0}^{\tau-1}  \Vert \theta^{(t)} - \theta_{u}^{(t,k)} \Vert^2 \nonumber
    \\
    & \leq \Vert \theta^{(t)} - \theta^{*} \Vert^2 -  \eta_l \tau (\mathcal{L}(\theta^{(t)} - \mathcal{L}(\theta^{*})) + 3\eta_l^2 \tau^2 \sigma_*^2 
    \label{c3}
    \\
    & + 24  \eta_{l}^3 \tau^2(\tau-1) L^2 (\mathcal{L}(\theta^{(t)} - \mathcal{L}(\theta^{*})) 
    \notag \\
    &+ 12\eta_{l}^3 \tau^2 (\tau-1) L^2 \sigma_*^2  + \frac{2\tau^3+3\tau^2+\tau}{3} \eta_{l}^3\Vert\vartheta\Vert^2
    \\
    & \leq \Vert \theta^{(t)} - \theta^{*} \Vert^2 - \frac{ \eta_l \tau}{3} (\mathcal{L}(\theta^{(t)} - \mathcal{L}(\theta^{*})) 
    \notag \\
    &+ 3  \eta_l^2 \tau^2 \sigma_*^2 + 12  \eta_{l}^3 \tau^2 (\tau-1) L^2 \sigma_*^2  
    \notag \\
    &+ \frac{2\tau^3+3\tau^2+\tau}{3} \eta_{l}^3\Vert\vartheta\Vert^2.
\end{align}
where \eqref{c1} and \eqref{c3} follows from $\eta_l \leq \frac{1}{6{\tau}L}$ and \eqref{c2} uses Lemma 6. On average over all rounds, we have 
\begin{align}
    \sum_{t=0}^{T-1}  \mathcal{L}(\theta^{(t)}) - \mathcal{L}(\theta^{*}) \leq \frac{3 \Vert \theta^{(0)} - \theta^{*} \Vert^2}{\sum_{t=0}^{T-1}  \eta_l \tau} + 9 \eta_l \tau \sigma_*^2 
    \notag \\
    + 36\eta_l^2\tau(\tau-1)L \sigma_*^2 + (2\tau^2+3\tau+1)\eta_{l}^2\Vert\vartheta\Vert^2.
\end{align}
This implies, 
\begin{align}
    \mathcal{L}(\Bar{\theta^{T})} - \mathcal{L}(\theta^*) \leq \mathcal{O}\Big(\frac{\Vert \theta^{(0)} - \theta^{*} \Vert^2}{\sum_{t=0}^{T-1}  \eta_l \tau}\Big) + \mathcal{O}\Big(\eta_l^2\tau(\tau-1)L \sigma_*^2\Big) 
     \notag \\
    + \mathcal{O}\Big(\eta_l \tau \sigma_*^2\Big) + \mathcal{O}\Big((2\tau^2+3\tau+1)\eta_{l}^2\Vert\vartheta\Vert^2\Big),
\end{align}
where $\Bar{\theta^{T}} = \frac{\sum_{t=0}^{T-1}  \theta^{(t)}}{\sum_{t=0}^{T-1} }$. Apply Theorem~\ref{theorem:batch-unbiased-qubit-grad}, we have: 
\begin{align}
    & \mathcal{L}(\Bar{\theta^{(R)})} - \mathcal{L}(\theta^*) \notag\\
    & \leq \mathcal{O}\Big(\frac{\Vert \theta^{(0)} 
    - \theta^{*} \Vert^2}{\sum_{t=0}^{T-1}  \eta_l \tau}\Big) + \mathcal{O}\Big(\eta_l^2\tau(\tau-1)L \sigma_*^2\Big) \notag \\
    & + \mathcal{O}\Big(\eta_l \tau \sigma_*^2\Big) 
    + \mathcal{O}\Big((2\tau^2+3\tau+1)\eta_{l}^2{3\sigma^2}/{(2^{2n}-1)}\Big).
\end{align}
This completes the proof of Theorem~\ref{theorem:loss-decrease}

\section{Convergence Analysis for Non-Convex Objectives} \label{Proof:Nonconvex}
Our proof technique is inspired by \cite{2020-FL-FedNova} and we use one of their intermediate results to bound the client drift in non-convex settings as we describe below. We highlight the specific steps where we made adjustments to the analysis of \cite{2020-FL-FedNova} below. We begin by defining the following auxiliary variables that will used in the proof.
\begin{align}
    &\text{Quantum Gradient: } h_{u}^{(t)} = \frac{1}{\tau} \sum_{k=0}^{\tau-1} \nabla \ell_{u}(\theta_{u}^{(t,k)}).\\
    &\text{Normalized Gradient: } r_{u}^{(t)} = \frac{1}{\tau} \sum_{k=0}^{\tau-1} \nabla \mathcal{L}_{u}(\theta_{u}^{(t,k)}).
\end{align}
We also define $R^{(t)} = \frac{1}{U} \sum_{u=1}^U r_{u}^{(t)}$

\begin{lemma}[Bounding client drift in Non-Convex Setting]
\begin{align}
    \frac{1}{U} \sum_{u=1}^U \Vert \nabla \mathcal{L}_{u}(\theta^{(t)}) - h_{u}^{(t)} \Vert^2 \leq \frac{1}{8}  \Vert \nabla \mathcal{L}(\theta^{(t)}) \Vert^2 \notag \\
    + 5 \eta_l^2 L^2 \tau(\tau-1) \sigma_g^2.
\end{align}
\end{lemma}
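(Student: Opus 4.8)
The plan is to control the discrepancy $\nabla\mathcal{L}_u(\theta^{(t)}) - h_u^{(t)}$ by separating the stochastic (quantum measurement) noise from the deterministic local drift, and then to reduce the deterministic part to the aggregate client-drift quantity $\tfrac{1}{U}\sum_u\sum_k \|\theta_u^{(t,k)} - \theta^{(t)}\|^2$ that is already controlled in the non-convex setting. First I would introduce the noise-free normalized gradient $r_u^{(t)} = \tfrac{1}{\tau}\sum_{k=0}^{\tau-1}\nabla\mathcal{L}_u(\theta_u^{(t,k)})$ as a pivot and write
\begin{align}
\nabla\mathcal{L}_u(\theta^{(t)}) - h_u^{(t)} = \big(\nabla\mathcal{L}_u(\theta^{(t)}) - r_u^{(t)}\big) + \big(r_u^{(t)} - h_u^{(t)}\big).
\end{align}
Using the stochastic decomposition $\nabla\ell_u(\theta_u^{(t,k)}) = \nabla\mathcal{L}_u(\theta_u^{(t,k)}) + \vartheta$ from Definition~\ref{def:single-gradient}, the residual $r_u^{(t)} - h_u^{(t)}$ is an average of $\tau$ zero-mean noise terms, whose squared norm is bounded through Jensen's inequality \eqref{eq:jen1} and the bounded-variance property $\|\vartheta\|^2 \le \sigma_g^2$, exactly as the $\|\vartheta\|^2$ contributions are tracked in the convex drift lemma.

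Next I would bound the deterministic term. Applying Jensen's inequality \eqref{eq:jen1} and then the $L$-smoothness of $\mathcal{L}_u$ (Assumption~\ref{ass:L-smooth}) gives
\begin{align}
\big\|\nabla\mathcal{L}_u(\theta^{(t)}) - r_u^{(t)}\big\|^2
&\le \frac{1}{\tau}\sum_{k=0}^{\tau-1}\big\|\nabla\mathcal{L}_u(\theta^{(t)}) - \nabla\mathcal{L}_u(\theta_u^{(t,k)})\big\|^2 \notag \\
&\le \frac{L^2}{\tau}\sum_{k=0}^{\tau-1}\big\|\theta^{(t)} - \theta_u^{(t,k)}\big\|^2,
\end{align}
so that after averaging over the $U$ clients the entire left-hand side of the lemma is reduced to the aggregate client drift $\tfrac{L^2}{\tau U}\sum_u\sum_k\|\theta^{(t)} - \theta_u^{(t,k)}\|^2$ plus the noise residual.

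The final step is to substitute the non-convex client-drift recursion adapted from the intermediate result of \cite{2020-FL-FedNova}, which expresses the drift as a multiple of $\eta_l^2\tau^2(\tau-1)$ times a combination of the global gradient norm $\|\nabla\mathcal{L}(\theta^{(t)})\|^2$ and the gradient-variance bound $\sigma_g^2$; the passage from the per-client, per-iterate gradients to the single global gradient $\nabla\mathcal{L}(\theta^{(t)})$ is where the bounded gradient-dissimilarity assumption enters. \emph{The main obstacle} is that this drift bound is self-referential — the quantity $\tfrac{1}{U}\sum_u\sum_k\|\theta^{(t)} - \theta_u^{(t,k)}\|^2$ reappears on its own right-hand side — so the learning-rate restriction $\eta_l \le \tfrac{1}{6\tau L}$ must be used to force the self-coefficient $3\eta_l^2\tau(\tau-1)L^2 \le \tfrac12$ and absorb it onto the left, precisely as in the convex client-drift lemma. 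Carrying out this absorption and then tracking the smoothness factor $L^2$, the compensating $1/\tau$ from Jensen, and the noise residual is the bookkeeping that produces the stated coefficients $\tfrac18$ on $\|\nabla\mathcal{L}(\theta^{(t)})\|^2$ and $5\eta_l^2 L^2\tau(\tau-1)\sigma_g^2$ on the variance term.
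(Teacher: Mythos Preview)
Your proposal is correct and follows essentially the same route as the paper. Both arguments separate the quantum noise $r_u^{(t)}-h_u^{(t)}$ from the deterministic drift $\nabla\mathcal{L}_u(\theta^{(t)})-r_u^{(t)}$, invoke the FedNova intermediate result to control the latter in terms of $\|\nabla\mathcal{L}(\theta^{(t)})\|^2$ and $\sigma_g^2$, and then use $\eta_l\le 1/(6\tau L)$ to force the self-referential coefficient $D=4\eta_l^2L^2\tau(\tau-1)\le 1/9$, which is exactly what produces the constants $1/8$ and $5\eta_l^2L^2\tau(\tau-1)$; the only cosmetic difference is that the paper quotes FedNova's equation~(87) for $\|\nabla\mathcal{L}_u(\theta^{(t)})-r_u^{(t)}\|^2$ directly, whereas you pass through the iterate drift $\sum_k\|\theta_u^{(t,k)}-\theta^{(t)}\|^2$ first via smoothness before applying the FedNova recursion.
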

\textit{Proof.} Let $D = 4 \eta_l^2 L^2 \tau(\tau-1)$. From equation (87) in \cite{2020-FL-FedNova}
\begin{align}
    &\frac{1}{U} \sum_{u=1}^U \Vert \nabla \mathcal{L}_{u}(\theta^{(t)}) - h_{u}^{(t)} \Vert^2 \notag\\ 
    &\leq\frac{1}{U} \sum_{u=1}^U \Vert \nabla \mathcal{L}_{u}(\theta^{(t)}) - r_{u}^{(t)} - \sum^{\tau}_{k=1} \vartheta \Vert^2 \notag \\
    &\leq\frac{1}{U} \sum_{u=1}^U \Vert \nabla \mathcal{L}_{u}(\theta^{(t)}) - r_{u}^{(t)} \Vert^2 + \tau \Vert\vartheta\Vert^2  \notag \\
    &\leq \frac{D}{D-1} \Vert \nabla \mathcal{L}_{u}(\theta^{(t)}) \Vert^2 + \frac{D }{1-D}\sigma_g^2 + \tau \Vert\vartheta\Vert^2 \notag \\
    &\leq \frac{D}{D-1} \Vert \nabla \mathcal{L}_{u}(\theta^{(t)}) \Vert^2 + \frac{D }{1-D}\sigma_g^2 + \tau \Vert\vartheta\Vert^2.
\end{align}
From $\eta_l \leq \frac{1}{6{\tau}L}$, we have $D \leq \frac{1}{9}$ so $\frac{1}{1-D} \leq \frac{9}{8}$ and $\frac{D}{1-D} \leq \frac{1}{8}$. Therefore, we have 
\begin{align}
    & \frac{1}{U} \sum_{u=1}^U \Vert \nabla \mathcal{L}_{u}(\theta^{(t)}) - h_{u}^{(t)} \Vert^2 \notag\\ 
    &\leq \frac{1}{8} \Vert \nabla \mathcal{L}_{u}(\theta^{(t)}) \Vert^2 + \frac{9D\sigma_g^2}{8} + \tau \Vert\vartheta\Vert^2\\
    &\leq \frac{1}{8} \Vert \nabla \mathcal{L}_{u}(\theta^{(t)}) \Vert^2 + \frac{9D\sigma_g^2}{8} + \tau \Vert\vartheta\Vert^2\\
    & \leq \frac{1}{8} \Vert \nabla \mathcal{L}_{u}(\theta^{(t)}) \Vert^2 + 5 \eta_l^2 L^2 \tau(\tau-1)\sigma_g^2 + \tau \Vert\vartheta\Vert^2.
\end{align}
Apply Theorem~\ref{theorem:batch-unbiased-qubit-grad}, we have: 
\begin{align}
     \frac{1}{U} \sum_{u=1}^U &\Vert \nabla \mathcal{L}_{u}(\theta^{(t)}) - h_{u}^{(t)} \Vert^2 
    \leq \frac{1}{8} \Vert \nabla \mathcal{L}_{u}(\theta^{(t)}) \Vert^2 \notag\\ 
    &+ 5 \eta_l^2 L^2 \tau(\tau-1)\sigma_g^2 + \tau {3\sigma^2}/{(2^{2n}-1)}.
\end{align}
This completes the proof. 

\section{Proof of Theorem~\ref{theorem:adp-qfl-grad-progress}}\label{appendix:proof-theorem-adp-qfl}
The update of the global model can be written as follows,
\begin{align}
    \theta^{(t+1)} = \theta^{(t)} - \eta_l \tau R^{(t)}.
\end{align}

Now using the $L$-smoothness assumption, we have
\begin{align}
    & \mathcal{L}(\theta^{(t+1)}) - \mathcal{L}(\theta^{(t)}) \notag\\
    &\leq -\eta_l \tau \langle \nabla \mathcal{L}(\theta^{(t)}) , R^{(t)} \rangle + \frac{\eta_l^2 \tau^2 L}{2} \Vert R^{(t)} \Vert^2 
    \\ 
    &\leq -\eta_l \tau \underbrace{\langle \nabla \mathcal{L}(\theta^{(t)}) , R^{(t)} \rangle}_{B1} + \frac{\eta_l^2 \tau^2 L}{2U} \underbrace{\sum_{u=1}^U \Vert h_{u}^{(t)} \Vert^2}_{B2}
    \label{th2-1},
\end{align}
where \eqref{th2-1} followed by $\sum_{u=1}^U \Vert h_{u}^{(t)} \Vert^2$. 

\textbf{Bounding B1:}
\begin{align}
    B1 &= \langle \nabla \mathcal{L}(\theta^{(t)}) , \frac{1}{U} \sum_{u=1}^{U} h_{u}^{(t)} \rangle 
    \\
    &= \frac{1}{2} \Vert \nabla\mathcal{L}(\theta^{(t)}) \Vert^2 + \frac{1}{2} \Vert \frac{1}{U} \sum_{u=1}^{U} h_{u}^{(t)} \Vert^2 
    \notag \\
    &- \frac{1}{2} \Vert \nabla\mathcal{L}(\theta^{(t)}) - \frac{1}{U} \sum_{u=1}^{U} h_{u}^{(t)} \Vert^2
    \\
    & \geq \frac{1}{2} \Vert \nabla\mathcal{L}(\theta^{(t)}) \Vert^2 - \frac{1}{2U} \sum_{u=1}^U \Vert \nabla \mathcal{L}_{u}(\theta^{(t)}) - h_{u}^{(t)} \Vert^2
    \\ 
    & = \frac{1}{2} \Vert \nabla\mathcal{L}(\theta^{(t)}) \Vert^2 - \frac{1}{2U} \sum_{u=1}^U \Vert \nabla \mathcal{L}_{u}(\theta^{(t)}) - h_{u}^{(t)} \Vert^2.
\end{align}

\textbf{Bounding B2:}
we have 
\begin{align}
    B2 &= \sum_{u=1}^U \Vert h_{u}^{(t)} \Vert^2
    \\
    &= \frac{1}{U} \sum_{u=1}^{U} \Vert h_{u}^{(t)} - \nabla \mathcal{L}_{u}(\theta^{(t)}) + \nabla \mathcal{L}_{u}(\theta^{(t)}) 
    \notag \\
    &- \nabla \mathcal{L}(\theta^{(t)}) + \nabla \mathcal{L}(\theta^{(t)}) \Vert^2
    \\
    &\leq \frac{3}{U} \sum_{u=1}^{U} ( \Vert h_{u}^{(t)} - \nabla \mathcal{L}_{u}(\theta^{(t)}) \Vert^2 + \Vert \nabla \mathcal{L}_{u}(\theta^{(t)}) 
    \notag \\
    &- \nabla \mathcal{L}(\theta^{(t)}) \Vert^2 + \Vert \nabla \mathcal{L}(\theta^{(t)}) \Vert^2 )
    \label{b2-1}
    \\
    & \leq \frac{3}{U} \sum_{u=1}^{U} \Vert h_{u}^{(t)} - \nabla \mathcal{L}_{u}(\theta^{(t)}) \Vert^2 + 3 \sigma_g + 3\Vert \nabla \mathcal{L}(\theta^{(t)}) \Vert^2, \label{b2-2}
\end{align}
where \eqref{b2-1} uses Jensen's inequality, \eqref{b2-2} uses bounded heterogeneity assumption. We would like to note that the bound for B2 is our contribution and is needed in our proof due to the relaxation in \eqref{A1A2}. The bound for B1 follows a similar technique as in \cite{2020-FL-FedNova}.

Substituting the B1 and B2 bounds into \eqref{th2-1}, we have
\begin{align}
    & \mathcal{L}(\theta^{(t+1)}) - \mathcal{L}(\theta^{(t)}) \notag\\
    &\leq -  \eta_l \tau \frac{1}{2} \Vert\nabla\mathcal{L}(\theta^{(t)}) \Vert^2 + \frac{1}{2U} \Vert \nabla \mathcal{L}_{u}(\theta^{(t)}) - h_{u}^{(t)} \Vert^2 \notag\\
    & \quad+ \frac{\eta_l \tau L}{2} \Big(3 \sigma_g^2 + 3 \Vert \nabla\mathcal{L}(\theta^{(t)}) \Vert^2 
    \notag \\
    & \quad+ \frac{3}{U} \sum_{u=1}^{U} ( \Vert h_{u}^{(t)} - \nabla \mathcal{L}_{u}(\theta^{(t)}) \Vert^2)\Big) \nonumber
    \\
    & \leq  -  \eta_l \tau \Big(\frac{1}{4} \Vert \nabla\mathcal{L}(\theta^{(t)}) \Vert^2 
    \notag \\
    & \quad+ \frac{3}{U} \Vert \nabla \mathcal{L}_{u}(\theta^{(t)}) - h_{u}^{(t)} \Vert^2 + 3 \eta_l \tau L \sigma_g^2 \Big) 
    \\
    & \leq  -  \eta_l \tau \Big( \frac{1}{4} \Vert \nabla\mathcal{L}(\theta^{(t)}) \Vert^2 
    \notag \\
    & \quad+ \frac{3}{U} \Big\Vert \nabla \mathcal{L}_{u}(\theta^{(t)}) - r_{u}^{(t)} + \sum^{\tau}_{k=1}\vartheta\Big\Vert^2 + 3 \eta_l \tau L \sigma_g^2 \Big) 
    \label{th2-a}
    \\
    & \leq  -  \eta_l \tau \Big( \frac{1}{4} \Vert \nabla\mathcal{L}(\theta^{(t)}) \Vert^2 
    \notag \\
    & \quad+ \frac{3}{U} \Big\Vert \nabla \mathcal{L}_{u}(\theta^{(t)}) - r_{u}^{(t)}\Big\Vert^2 + 3 \eta_l \tau L \sigma_g^2  + \Vert\sum^{\tau}_{k=1}\vartheta\Vert^2\Big) 
    \\
    & \leq  -  \eta_l \tau \Big( \frac{1}{4} \Vert \nabla\mathcal{L}(\theta^{(t)}) \Vert^2 \notag \\
    & \quad+ \frac{3}{U} \Big\Vert \nabla \mathcal{L}_{u}(\theta^{(t)}) - r_{u}^{(t)}\Big\Vert^2 + 3 \eta_l \tau L \sigma_g^2  + \sum^{\tau}_{k=1}\Vert\vartheta\Vert^2\Big)     
    \\    
    & \leq -  \eta_l \tau \Big( \frac{1}{8} \Vert \nabla\mathcal{L}(\theta^{(t)}) \Vert^2 + 3 \eta_l \tau L \sigma_g^2 
    \notag \\
    & \quad+ 5\eta_l^2 \tau(\tau-1) L^2 \sigma_g^2 + \sum^{\tau}_{k=1}\tau \Vert\vartheta\Vert^2\Big),
    \label{th2-b}
\end{align}
where \eqref{th2-a} uses $\eta_l \leq \frac{1}{6{\tau}L}$ and \eqref{th2-b} uses Lemma 7. Thus arranging terms and averaging over all rounds, we have
\begin{align}
    \mathcal{L}(\theta^{(0)}) - \mathcal{L}(\theta^{(t)}) 
    & \leq \eta_l \tau \Big( \frac{1}{8} \sum^{T}_{t=0}\Vert\nabla\mathcal{L}(\theta^{(t)}) \Vert^2 + 3 \sum^{T}_{t=0}\eta_l \tau L \sigma_g^2 
    \notag \\
    &+ 5\sum^{T}_{t=0}\eta_l^2 \tau(\tau-1) L^2 \sigma_g^2 + \sum^{T}_{t=0}\sum^{\tau}_{k=1}\Vert\vartheta\Vert^2\Big) \notag\\
    & = \eta_l \tau \Big( \frac{1}{8} \sum^{T}_{t=0}\Vert\nabla\mathcal{L}(\theta^{(t)}) \Vert^2 
    + T\tau^2\Vert\vartheta\Vert^2
    \notag \\
    &+ 3 T\eta_l \tau L \sigma_g^2 + 5T\eta_l^2 \tau(\tau-1) L^2 \sigma_g^2 \Big) 
\label{eq:non-convex-convergence}.
\end{align}
To characterize the gradient progress when the circumstance of being trapped into the local minimizers, we propose a probability $P_\textrm{sharp}$ which is the probability of being trapped into the sharp minimizers. This sharp minimizer is being capped by the $L$-smooth. Thus, the condition is that, when the batch-wise gradient surpasses the $L$ value, the model can continue to progress. Otherwise, the gradient is trapped for the latter phase. Thus, ~\eqref{eq:non-convex-convergence} can be reformulated as:
\begin{align}
    &~~~~\mathcal{L}(\theta^{(0)}) - \mathcal{L}(\theta^{(t)}) 
    \leq \eta_l \tau \Big( \frac{1}{8} \Big[\sum^{T}_{t=0}\Vert\nabla\mathcal{L}(\theta^{(t)})\Vert^2 
    \notag \\
    &-\sum^{T}_{t=m}P_\textrm{sharp}\mathbbm{1}(\nabla\mathcal{L}(\theta^{(m)})+\vartheta < L)\Vert\nabla\mathcal{L}(\theta^{(t)}) \Vert^2\Big] \notag \\
    &+ T\tau^2\Vert\vartheta\Vert^2
    + 3 T\eta_l \tau L \sigma_g^2 + 5T\eta_l^2 \tau(\tau-1) L^2 \sigma_g^2 \Big).
\end{align}
Thus rearranging terms and averaging over all rounds we have 
\begin{align}
    &\frac{\sum^{T-1}_{t=0} \Vert\nabla\mathcal{L}(\theta^{(t)}) \Vert^2}{T}\notag\\
    &\leq 
      \frac{8(\mathcal{L}(\theta^{(0)}) - \mathcal{L}(\theta^{(t)}))}{T\eta_l \tau 
      (1-\sum^{T}_{t=m}P_\textrm{sharp}\mathbbm{1}(\nabla\mathcal{L}(\theta^{(m)})+\vartheta < L))} \notag \\
    &+ \frac{24 T\eta_l \tau L \sigma_g^2 
     + 40T\eta_l^2 \tau(\tau-1) L^2 \sigma_g^2}{(1-\sum^{T}_{t=m}P_\textrm{sharp}\mathbbm{1}(\nabla\mathcal{L}(\theta^{(m)})+\vartheta < L))}  .
\end{align}
Let $\kappa = (1-\sum^{T}_{t=m}P_\textrm{sharp}\mathbbm{1}(\nabla\mathcal{L}(\theta^{(m)})+\vartheta < L))$ . This implies,
\begin{align}
    &\min_{t\in[T]}\Vert\nabla\mathcal{L}(\theta^{(t)}) \Vert^2 
    \leq 
    \mathcal{O}\Big(\frac{8(\mathcal{L}(\theta^{(0)}) - \mathcal{L}(\theta^{(t)}))}{T\eta_l\tau\kappa}\Big) \notag \\
    &+ 
    \mathcal{O}(\frac{\eta_l^2 \tau(\tau-1) L^2 \sigma_g^2}{\kappa}) +
    \mathcal{O}(\frac{24 T\eta_l \tau L \sigma_g^2}{\kappa}) .
\end{align}
This completes the proof.

\end{document}